\documentclass{article}

\usepackage[main, final]{neurips_2026}

\usepackage[utf8]{inputenc} 
\usepackage[T1]{fontenc}    
\usepackage[pdftex,colorlinks,linkcolor=blue,citecolor=blue,filecolor=blue,urlcolor=blue]{hyperref}            
\usepackage{booktabs}       
\usepackage{amsfonts}       
\usepackage{nicefrac}       
\usepackage{microtype}      
\usepackage{xcolor}         

\usepackage{amsmath}
\usepackage{amssymb}
\usepackage{mathtools}
\usepackage{amsthm}

\usepackage{graphicx}
\usepackage{subcaption}

\usepackage{cleveref}

\usepackage{aliascnt}

\newtheorem{theorem}{Theorem}[section]
\crefname{theorem}{Theorem}{Theorems}

\newaliascnt{lemma}{theorem}
\newtheorem{lemma}[lemma]{Lemma}
\aliascntresetthe{lemma}
\crefname{lemma}{Lemma}{Lemmas}

\newaliascnt{proposition}{theorem}

\aliascntresetthe{proposition}
\crefname{proposition}{Proposition}{Propositions}

\newaliascnt{corollary}{theorem}
\newtheorem{corollary}[corollary]{Corollary}
\aliascntresetthe{corollary}
\crefname{corollary}{Corollary}{Corollaries}

\newaliascnt{fact}{theorem}

\aliascntresetthe{fact}
\crefname{fact}{Fact}{Facts}

\newaliascnt{definition}{theorem}
\newtheorem{definition}[definition]{Definition}
\aliascntresetthe{definition}
\crefname{definition}{Definition}{Definitions}

\newaliascnt{remark}{theorem}

\aliascntresetthe{remark}
\crefname{remark}{Remark}{Remarks}

\newaliascnt{conjecture}{theorem}

\aliascntresetthe{conjecture}
\crefname{conjecture}{Conjecture}{Conjectures}

\newaliascnt{claim}{theorem}

\aliascntresetthe{claim}
\crefname{claim}{Claim}{Claims}

\newaliascnt{question}{theorem}

\aliascntresetthe{question}
\crefname{question}{Question}{Questions}

\newaliascnt{exercise}{theorem}

\aliascntresetthe{exercise}
\crefname{exercise}{Exercise}{Exercises}

\newaliascnt{example}{theorem}

\aliascntresetthe{example}
\crefname{example}{Example}{Examples}

\newaliascnt{notation}{theorem}

\aliascntresetthe{notation}
\crefname{notation}{Notation}{Notations}

\newaliascnt{problem}{theorem}

\aliascntresetthe{problem}
\crefname{problem}{Problem}{Problems}

\newcommand\numberthis{\addtocounter{equation}{1}\tag{\theequation}}

\newenvironment{CompactEnumerate}{
\begin{list}{\arabic{enumi}.}{%
\usecounter{enumi}
\setlength{\leftmargin}{10pt}
\setlength{\itemindent}{2pt}
\setlength{\topsep}{-1pt}
\setlength{\itemsep}{-1pt}
}}
{\end{list}}

\newenvironment{CompactItemize}{
\begin{list}{$\bullet$}{%
\setlength{\leftmargin}{10pt}
\setlength{\itemindent}{2pt}
\setlength{\topsep}{2pt}
\setlength{\itemsep}{2pt}
}}
{\end{list}}

\newcommand{\norm}[1]{\lVert#1\rVert}

\def\E{\mathbb E}

\newcommand{\R}{\mathbb R}

\newcommand{\kk}{\mathbf{k}}
\newcommand{\Poi}{\mathrm{Poisson}}
\newcommand{\HH}{\mathbb H}
\newcommand{\HC}{\mathcal Q}

\title{Locality Sensitive Hashing for $p$-Exponential Kernels with Applications to Density Estimation}

\author{%
  Barak Gorodissky \\
  Tel Aviv University\\
 \texttt{gorodissky@mail.tau.ac.il} \\
   \And
   Tal Wagner \\
   Tel Aviv University \\
   \texttt{talwag@tauex.tau.ac.il} \\
}

\begin{document}

\maketitle

\begin{abstract}
  A kernel $\kk(x,y)$ is \emph{LSHable} if there exists a locality sensitive hashing scheme $\mathcal H$ such that $\kk(x,y)=\Pr_{h\sim\mathcal H}[h(x)=h(y)]$ for all $x,y$. This notion plays a key role in efficient kernel methods in high dimensions. In this work, we show that the $p$-exponential kernel $\kk(x,y)=\exp(-\norm{x-y}_p)$ is LSHable in bounded regions for all $1<p\leq2$. Previously, this was known only for $p=1$. Our new \emph{mosaic LSH} scheme is based on a Poisson hyperplane process with hyperplanes sampled as $\ell_1$-biased $p$-stable vectors, for which we develop efficient sampling procedures. As applications, our results yield new and efficient density estimation methods based on LSHability for those $p$-exponential kernels.
\end{abstract}

\section{Introduction}\label{sec:intro}
Kernel densities have long been fundamental to machine learning \cite{hofmann2008kernel}. Hashing-based techniques have recently emerged as a powerful approach for kernel in high-dimensional settings where traditional techniques are often rendered infeasible. The central notion is the following:

\begin{definition}[kernel LSH]\label{def:kernellsh}
    Let $\kk:\mathcal X\times\mathcal X\rightarrow[0,1]$ be a map 
    over pairs in $\mathcal X$. A \emph{locality sensitive hashing (LSH) scheme} for $\kk$ is a distribution $\mathcal H$ over hash functions on $\mathcal X$ such that
    \begin{equation} \label{eq:lsh}
    \forall \; x,y\in\mathcal X , \quad\quad \kk(x,y) = \Pr_{h\sim H}[h(x)=h(y)] . 
    \end{equation}
    $\mathcal H$ must be endowed with an algorithm for sampling $h\sim\mathcal{H}$, a finite representation of $h$, and an algorithm to evaluate $x\mapsto h(x)$ given $x\in\mathcal{X}$ and $h$. 
    If such $\mathcal H$ exists, we call $\kk$ an \emph{LSHable kernel}. 
\end{definition}
This notion was studied in Charikar's seminal work on the SimHash algorithm \cite{charikar2002similarity}, and more recently has been central in high-dimensional kernel density estimation (KDE) due to a breakthrough work by Charikar and Siminelakis \cite{charikar2017hashing}. 
Intuitively, $\kk$ encodes the similarity between pairs of elements in $\mathcal X$, and $\mathcal H$ is ``locality sensitive'' because similar elements are more likely to be hashed together. This property is the driving force behind a range of efficient high-dimensional kernel methods, including relative-error KDE \cite{charikar2017hashing,siminelakis2019rehashing,backurs2019space}, streaming KDE \cite{coleman2020sub,liu2023one}, private KDE \cite{coleman2021one,wagner2023fast,wagnerlearning}, kernel ridge regression \cite{zandieh2020scaling}, partition function estimation for importance sampling \cite{spring2017new}, and long-context attention in large language models \cite{joshirace,wolfson2026positional}. 

Despite its centrality and usefulness, there are rather few kernels for which LSHability is known. Classical examples include MinHash \cite{broder1997resemblance} for the Jaccard similarity 
between sets and SimHash \cite{charikar2002similarity} for the angular similarity 
between points on a sphere. 
For kernels over $\R^d$, the Random Binning Features scheme of Rahimi and Recht \cite{rahimi2007random} constitutes an LSH scheme for the 1-exponential kernel, a.k.a~the Laplacian kernel, $\kk(x,y)=\exp(-\norm{x-y}_1)$. Backurs et al.~\cite{backurs2019space} gave a different LSH scheme for the same kernel in a bounded region. This has left as an open problem even the natural case of the 2-exponential kernel,\footnote{Note that the 2-exponential kernel is different from the Gaussian kernel, $\kk(x,y)=\exp(-\norm{x-y}_2^2)$, in which the $\ell_2$ norm in the exponent is squared. We show in Appendix~\ref{sec:nonlsh} that no LSH scheme exists for the Gaussian kernel.} $\kk(x,y)=\exp(-\norm{x-y}_2)$. Applications of LSH-based methods to the 2-exponential and to other kernels have largely gone through various approximations or heuristics.

\paragraph{Our results.}

\begin{table}
  \caption{Summary of LSH results for $\ell_p$ metrics and $p$-exponential kernels.}\label{tbl:main}
  \label{sample-table}
  \centering
  \begin{tabular}{llll}
    \toprule
         &    $p=1$  & $p\in(1,2)$ & $p=2$ \\
    \midrule
    Metric LSH for $\norm{x-y}_p$ &  \cite{indyk1998approximate}  &  \cite{datar2004locality, nguyen2013approximate}   &  \cite{datar2004locality,andoni2008near}  \\
    Kernel LSH for $\exp(-\norm{x-y}_p)$     & \cite{rahimi2007random,backurs2019space} & This work & This work      \\
    \bottomrule
  \end{tabular}
\end{table}

In this work, we design new kernel LSH schemes and prove the LSHability property for all $p$-exponential kernels with $1<p\leq2$ in a bounded region. 
Our main result is the following.

\begin{theorem}\label{thm:main}
    Let $p\in(1,2]$ and $\sigma>0$. There exists a kernel LSH scheme $\mathcal H$ that satisfies \Cref{def:kernellsh} for the $p$-exponential kernel, $\kk(x,y)=\exp(-\norm{x-y}_p/\sigma)$, over points in $\R^d$ with coordinates bounded in $[-1,1]$. 
    The expected time to sample a hash function $h\sim \mathcal H$ and to evaluate $h$ on a point $x$ are each $O(d^2/\sigma)$.
\end{theorem}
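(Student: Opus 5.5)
We build the scheme from a Poisson hyperplane process, as the abstract indicates. We sample hyperplanes $\{z:\langle u,z\rangle = t\}$ from a Poisson process with intensity measure $\lambda\,\mu(du)\,dt$. Here $u\in\R^d$ has some distribution $\mu$, and $t$ ranges over a bounded interval large enough that every hyperplane meeting $[-1,1]^d$ is covered. The hash $h(x)$ is the cell of the resulting arrangement containing $x$, i.e.\ the sign vector of $x$ relative to all sampled hyperplanes. Two points $x,y$ collide iff no sampled hyperplane separates them. A fixed hyperplane with normal $u$ separates $x$ and $y$ iff $t$ lies between $\langle u,x\rangle$ and $\langle u,y\rangle$, a set of $t$-measure $|\langle u,x-y\rangle|$. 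The number of separating hyperplanes is therefore Poisson with mean $\lambda\,\E_{u\sim\mu}|\langle u,x-y\rangle|$. Hence
\[
\Pr[h(x)=h(y)]=\exp\bigl(-\lambda\,\E_{u\sim\mu}|\langle u,x-y\rangle|\bigr).
\]
The whole problem thus reduces to finding a nonnegative measure $\mu$ with $\E_\mu|\langle u,v\rangle| = \norm{v}_p$ for all $v$, and to sampling it efficiently.

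\textbf{Choosing $\mu$.} Let $g$ have i.i.d.\ standard $p$-stable coordinates. By $p$-stability, $\langle g,v\rangle \overset{d}{=} \norm{v}_p\,g_1$, so $\E|\langle g,v\rangle| = c_p\norm{v}_p$ with $c_p=\E|g_1|<\infty$; finiteness holds because $p>1$, which is exactly why $p=1$ is excluded here. Taking $\mu$ to be the law of $g/c_p$ and $\lambda=1/\sigma$ already gives the kernel $\exp(-\norm{x-y}_p/\sigma)$.

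\textbf{Bounding the number of hyperplanes.} The rate of hyperplanes meeting the cube is governed by the range of $t$. For normal $u$, the relevant range has length $2\norm{u}_1$, since $|\langle u,z\rangle|\le\norm{u}_1$ for $z\in[-1,1]^d$. This suggests reweighting by $\norm{u}_1$, which is the ``$\ell_1$-biased'' distribution. Write $Z=\E\norm{g}_1$. The expected number of hyperplanes meeting the cube is then $\lambda\cdot 2\,\E\norm{u}_1 = 2\,\E\norm{g}_1/(\sigma c_p) = 2d/\sigma$. We therefore sample a count $N\sim\Poi(2d/\sigma)$. Each hyperplane takes $u$ from the law $\nu$ with $d\nu/d\mu(g)\propto\norm{g}_1$, and $t$ uniform on $[-\norm{u}_1/c_p,\ \norm{u}_1/c_p]$. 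This is the same Poisson process restricted to hyperplanes meeting the cube, so the collision identity above holds exactly for $x,y\in[-1,1]^d$. Each hyperplane costs $O(d)$ to store and evaluate, so both sampling and evaluation take expected time $O(d)\cdot O(d/\sigma)=O(d^2/\sigma)$. This holds provided one $\ell_1$-biased sample costs $O(d)$ expected time.

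\textbf{Main obstacle: sampling $\nu$ efficiently.} A naive rejection sampler is hopeless because $\norm{g}_1$ is heavy-tailed. Instead, decompose $\norm{g}_1=\sum_i|g_i|$: the density $\propto\sum_i|g_i|\prod_j f(g_j)$ is a uniform mixture over $i$. So we pick a coordinate $i$ uniformly, sample $g_i$ from the size-biased one-dimensional law with density $\propto|s|f(s)$, and sample the other coordinates as ordinary $p$-stables via Chambers--Mallows--Stuck. It remains to sample the size-biased $p$-stable in $O(1)$ expected time. I would attempt this with the CMS representation $S=\Phi(\Theta,W)$, with $\Theta$ uniform and $W$ exponential. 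Size-biasing then amounts to reweighting the joint law of $(\Theta,W)$ by $|\Phi(\Theta,W)|$. This factors as a power of $W$ times a function of $\Theta$, so $W$ becomes Gamma-distributed and $\Theta$ can be handled by rejection sampling against a bounded envelope. Verifying that this rejection step has constant acceptance rate uniformly in $p\in(1,2]$ is the delicate part.
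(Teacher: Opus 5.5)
Your reduction matches the paper's proof step for step. This covers the PHP with $p$-stable normals scaled by $1/(\sigma m_p)$ and Lebesgue shifts. It also covers the restriction to $\HC$, giving $N\sim\Poi(2d/\sigma)$ hyperplanes with $\ell_1$-biased normals and shifts uniform on $[-\norm{z}_1,\norm{z}_1]$. The uniform-coordinate mixture for the $\ell_1$ bias and the CMS reweighting that makes $\bar W\sim\mathrm{Gamma}(2-1/p,1)$ independent of $\bar\Theta$ are also the paper's. One small point concerns the scaling in your $t$-interval: if $u=g/c_p$, the interval is $[-\norm{u}_1,\norm{u}_1]$, and the extra $1/c_p$ is right only if $u$ denotes the unscaled $g$.

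The gap is the step you flag as delicate and leave open: an $O(1)$ expected-time sampler for the size-biased scalar. Without an explicit proposal for $\bar\Theta$ and a proof that its envelope constant is bounded, nothing controls the cost of the biased coordinate. So the $O(d^2/\sigma)$ sampling time claimed in the theorem is not established.

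This is not a routine check. The $\bar\Theta$-target is proportional to $|F_p(\theta)|$, where $F_p(\theta)=\sin(p\theta)(\cos\theta)^{-1/p}\cos((p-1)\theta)^{-(p-1)/p}$. For every $p<2$ this blows up like $(\cos\theta)^{-1/p}$ as $\theta\to\pm\pi/2$. Rejection from any bounded proposal density, e.g.\ the uniform law of $\Theta$, therefore has an infinite envelope. The proposal must carry the same $p$-dependent integrable singularity while remaining cheap to sample.

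The paper's missing ingredient is the proposal $V=S\arccos(Y)$, with $Y\sim\mathrm{Beta}(1-1/p,1)$ sampled by inversion and $S$ a uniform random sign. By change of variables its density is $\tfrac12(1-\tfrac1p)(\cos\theta)^{-1/p}|\sin\theta|$ on $(-\tfrac\pi2,\tfrac\pi2)$.

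To normalize the target, the paper uses the antiderivative $G_p(\theta)=\cos((p-1)\theta)^{1/p}(\cos\theta)^{1-1/p}$, which satisfies $G_p'=-(1-\tfrac1p)F_p$, $G_p(0)=1$ and $G_p(\pi/2)=0$. Hence the normalized target is $\tfrac12(1-\tfrac1p)|F_p(\theta)|$.

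The density ratio on $(0,\tfrac\pi2)$ is then $\sin(p\theta)/\bigl(\sin\theta\,\cos((p-1)\theta)^{1-1/p}\bigr)$. Using $\cos((p-1)\theta)^{1-1/p}\ge\cos((p-1)\theta)$ and the angle-addition formula, this is at most $1+\tan((p-1)\theta)\cot\theta$. Since $\tan x/x$ is increasing on $(0,\tfrac\pi2)$ and $(p-1)\theta\le\theta$, this is at most $p$.

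This gives envelope $m=p\le 2$ uniformly on $(1,2)$, which resolves exactly the uniformity concern you raised. The same proposal even works at $p=2$, where the paper instead samples the biased coordinate in closed form as a two-sided Rayleigh.
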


The coordinate restriction to $[-1,1]$ 
can be scaled to any interval by inversely scaling the kernel bandwidth $\sigma$. Thus, \Cref{thm:main} provides a kernel LSH scheme for any bounded region. The running time of the LSH scheme grows linearly with the side-length of a bounding hypercube for the region. 

As applications, \Cref{thm:main} leads to new LSH-based algorithms for $p$-exponential kernels. We highlight two examples in \Cref{sec:kde}: relative-error KDE and differentially private KDE.

\subsection{Background and Related Work}\label{sec:related}
\textbf{Metric LSH.} 
LSH was originally introduced 
by Indyk and Motwani \cite{indyk1998approximate} 
for metric spaces in the following form, for the purpose of approximate  nearest neighbor search in high dimensions. 
\begin{definition}[metric LSH \cite{indyk1998approximate}]\label{def:metriclsh}
    Let $(\mathcal X,\mathrm{D})$ be a metric space. A \emph{$(r_1,r_2,p_1,p_2)$-LSH} scheme for $(\mathcal X,\mathrm{D})$ is a distribution $\mathcal H$ over hash functions on $\mathcal X$ such that for each pair $x,y\in\mathcal X$,
    \[
      \mathrm{D}(x,y)\leq r_1 \Rightarrow \Pr_{h\sim\mathcal H}[h(x)=h(y)]\geq p_1 \quad \text{and} \quad \mathrm{D}(x,y)> r_2 \Rightarrow \Pr_{h\sim\mathcal H}[h(x)=h(y)]\leq p_2 . 
    \]
\end{definition}
They designed metric LSH schemes for $\ell_1$ and $\ell_2$ metric spaces. 
Datar et al.~\cite{datar2004locality} designed metric LSH schemes for all $\ell_p$ metrics with $p\in[1,2]$. 
Their approach relies on $p$-stable distributions, which generalize the spherical invariance (i.e., 2-stable) property of the Gaussian distribution \cite{indyk2000stable} (see \Cref{sec:prelim}). 
Andoni and Indyk \cite{andoni2008near} later presented a dramatically improved metric LSH for $\ell_2$, which turned out to be optimal \cite{o2014optimal,motwani2006lower}. 
Their approach was extended to $p\in(1,2)$ by Nguyen \cite{nguyen2013approximate}.

The interest in $\ell_p$ metrics with $p\in(1,2)$ has both mathematical and empirical motivations. Mathematically, these spaces embed isometrically into an infinite-dimensional $L_1$ space, suggesting that algorithmic results known for $\ell_1$ might extend to them. In machine learning, $\ell_p$-based formulations offer a useful interpolation between sparsity-promoting $\ell_1$ formulations and smooth $\ell_2$ formulations, with empirical benefits demonstrated in several applications \cite{kloft2011lp,salzo2018solving,liu2020strongly,varouchakis2025designing,koltsidopoulou2026enhancing}..

\textbf{Kernel LSH.} 
Charikar \cite{charikar2002similarity} focused on kernel LSH as defined in \Cref{def:kernellsh}, which had been previously used for set similarity measures like Broder's MinHash algorithm for the Jaccard similarity \cite{broder1997resemblance}. He showed that kernel LSH is achievable and useful for vectors in $\R^d$, designing the seminal SimHash algorithm. He also showed that if $\kk(x,y)$ is LSHable then $1-\kk(x,y)$ must satisfy the triangle inequality. Further properties of kernel LSH were studied by Chierichetti and Kumar \cite{chierichetti2015lsh}. 

Rahimi and Recht \cite{rahimi2007random} designed the Random Binning Features method for the $1$-exponential (a.k.a. Laplacian) kernel. As observed by Andoni and Indyk in \cite{andoni2009dimension}, this method constitutes a kernel LSH scheme. Backurs et al.~\cite{backurs2019space} gave a different kernel LSH scheme for the same kernel. Their scheme is restricted to points in a bounded region and is faster in certain parameter regimes.

Both of these LSH schemes for the 1-exponential kernel 
share the same underlying technique: imposing a random axis-aligned grid on $\R^d$ and hashing each point to the grid cell that contains it. They differ in how the random grid is selected. 
The randomized grid approach is well-suited for $\ell_1$, albeit less so for other $\ell_p$ spaces. The metric LSH schemes for $p\in[1,2]$ by  \cite{datar2004locality} are also based on randomly shifted grids, composed over projections onto $p$-stable directions. Those projections can be viewed as lossy probabilistic embeddings of $\ell_p$ into $\ell_1$. 
This approach suffices for metric LSH, which implicitly involves an approximation gap (the gap is $c=r_2/r_1$ in \Cref{def:metriclsh}). 
Unfortunately, it does not yield the exact collision identity in \Cref{def:kernellsh} outside the $p=1$ case. 
Therefore, new techniques are needed to achieve kernel LSH for $p\in(1,2]$. Our mosaic LSH scheme departs from randomly shifted grids and relies on Poisson hyperplane processes.. 
\Cref{tbl:main} contains a summary of metric and kernel LSH results for $p\in[1,2]$.

\textbf{Approximate kernel LSH.}
Approximate variants of kernel LSH relax the equality in (\ref{eq:lsh}) to hold only up to an additive \cite{andoni2009dimension,wagner2023fast} or multiplicative \cite{charikar2017hashing,backurs2019space}
  approximation factor. 
Such variants suffice for some LSHability-based methods, with an additional loss depending on the approximation quality.

A prominent example is the $2$-exponential kernel $\exp(-\norm{x-y}_2)$. 
An approximate kernel LSH scheme can be obtained by composing a $\ell_2\rightarrow\ell_1$ 
embedding step with distortion $1\pm\varepsilon$, based on a variant of the Johnson–Lindenstrauss (JL) transform \cite{ailon2009fast,matouvsek2008variants}, with a kernel LSH scheme for $\exp(-\norm{x-y}_1)$. This approach has been used in \cite{charikar2017hashing,backurs2019space,wagner2023fast}
to apply LSHability methods to the 2-exponential kernel.
The embedding distortion parameter $\varepsilon$ controls the discrepancy between the collision probability and the target kernel, and the sampling and evaluation costs depend polynomially on $\varepsilon^{-1}$. 
It therefore offers a trade-off between approximation accuracy and computational cost. 
For $p\in(1,2)$, approximate $\ell_p\rightarrow\ell_1$ embeddings (e.g., \cite{johnson1982embedding}) suggest an analogous route to approximate kernel LSH for $p$-exponential kernels, if they can be made algorithmic. However, such schemes would retain the embedding distortion as a parameter. Our results instead establish exact kernel LSH for all $p\in(1,2]$, removing the embedding distortion as a source of error and computational cost.

\textbf{Random tessellations.} Balog et al.~\cite{balog2016mondrian} showed that the pointwise cell collision probability of the Mondrian partition process~\cite{roy2008mondrian} is equal to the Laplacian kernel. O'Reilly and Tran \cite{o2022stochastic}
extended this to the 2-exponential kernel via an isotropic STIT process. These results establish stochastic processes with the collision identity in \cref{eq:lsh}, but without the sampling and evaluation bounds needed for LSH. 

\subsection{Our Techniques}

\begin{figure}
\captionsetup[subfigure]{labelformat=empty}
 \begin{subfigure}{0.2\textwidth}
     \includegraphics[width=\textwidth]{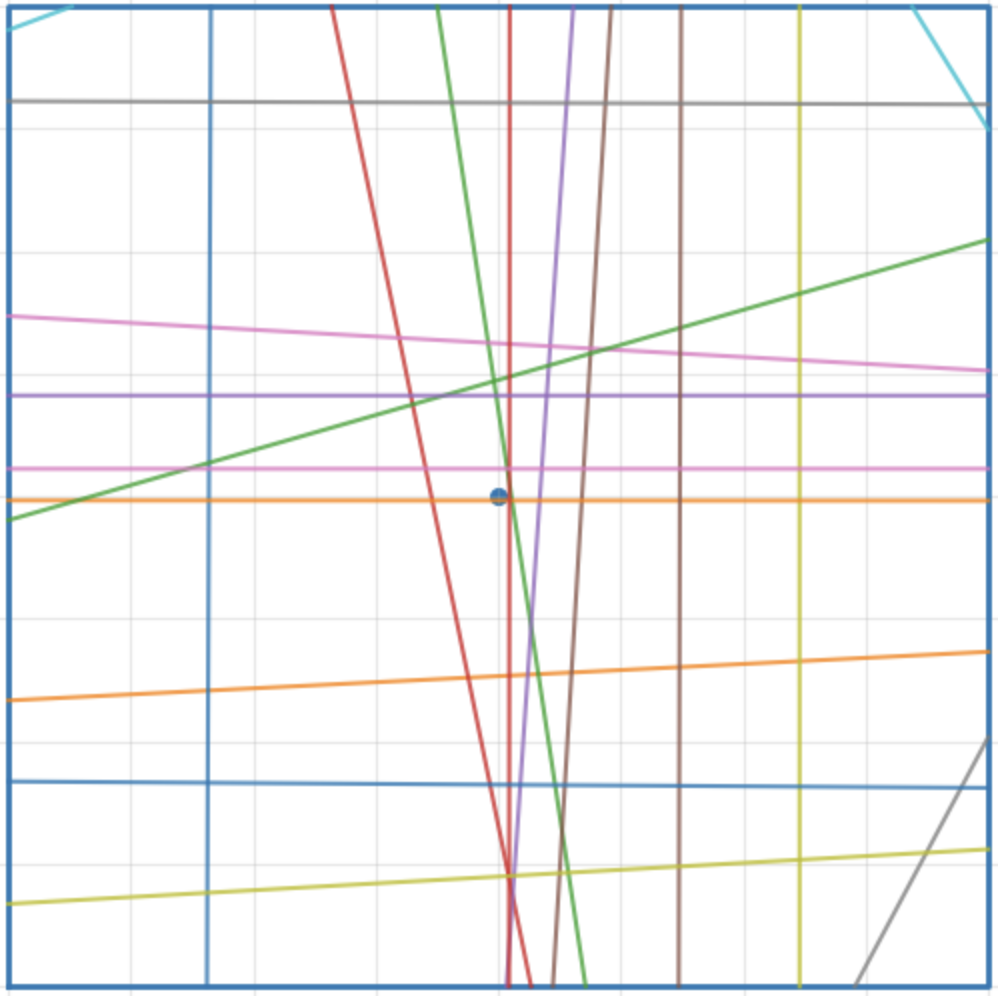}
     \caption{$p=1.25$}
     \label{fig:a}
 \end{subfigure}
 \hfill
 \begin{subfigure}{0.2\textwidth}
     \includegraphics[width=\textwidth]{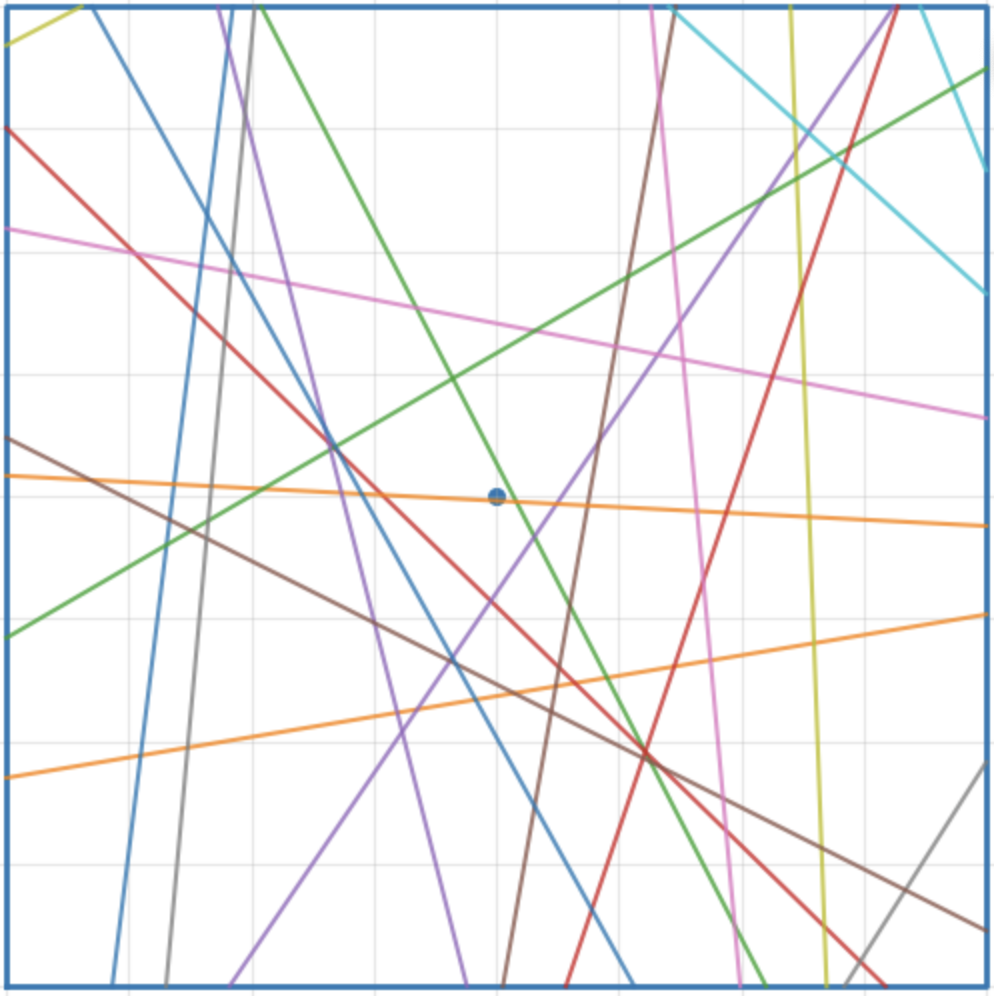}
     \caption{$p=1.5$}
     \label{fig:b}
 \end{subfigure}
  \hfill
 \begin{subfigure}{0.2\textwidth}
     \includegraphics[width=\textwidth]{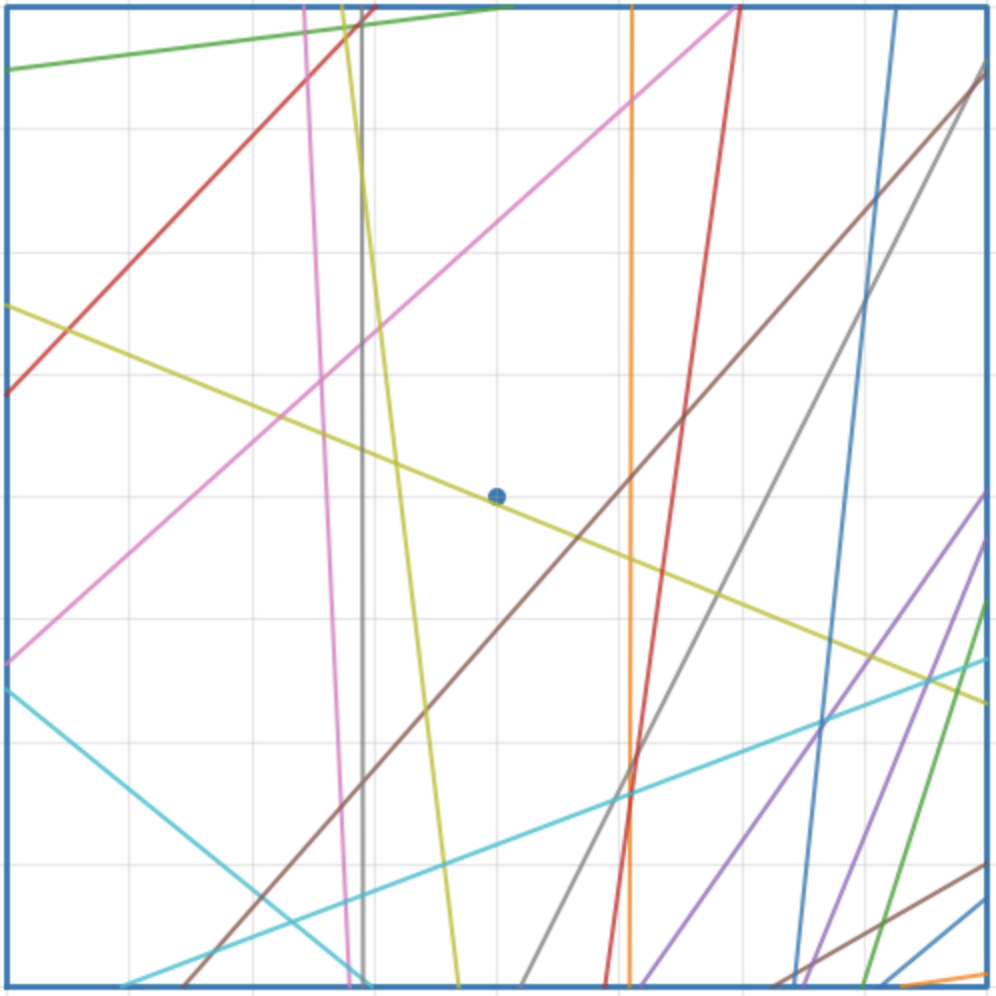}
     \caption{$p=1.75$}
     \label{fig:c}
 \end{subfigure}
  \hfill
 \begin{subfigure}{0.2\textwidth}
     \includegraphics[width=\textwidth]{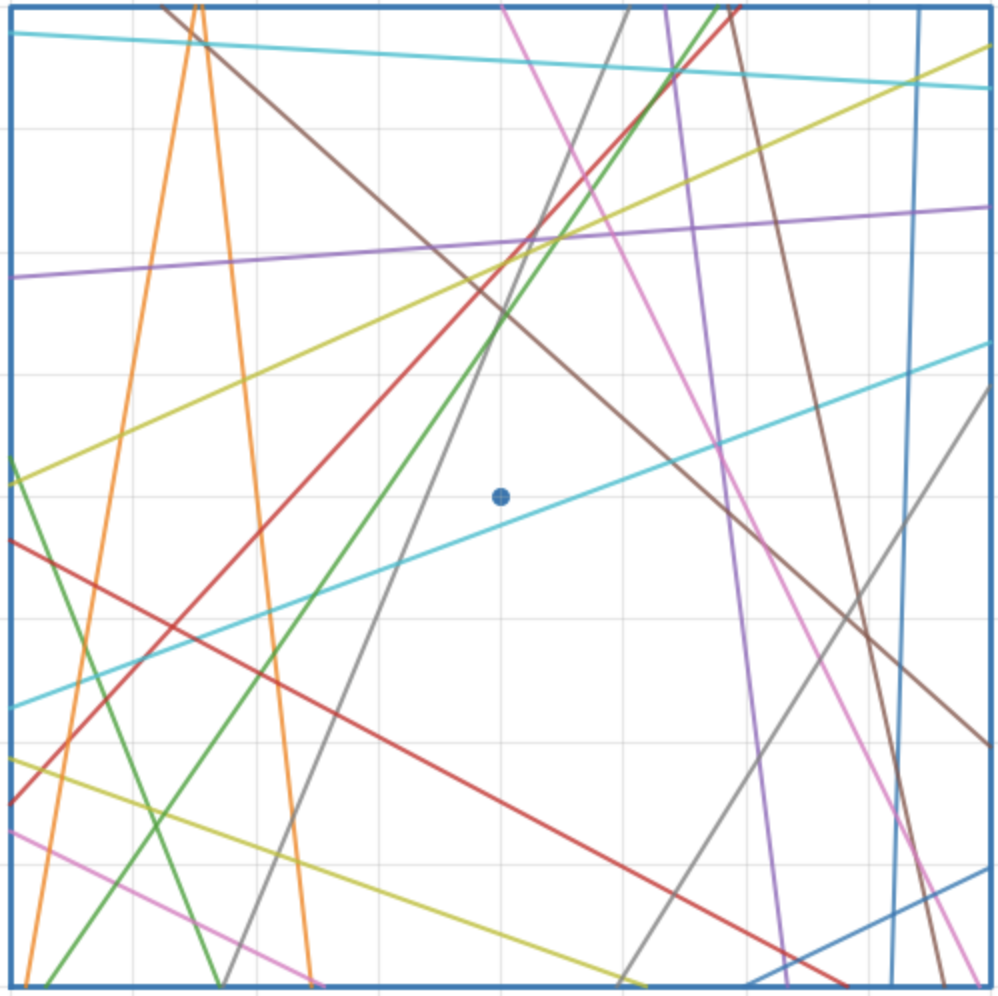}
     \caption{$p=2$}
     \label{fig:d}
 \end{subfigure}
 \caption{Samples from the mosaic LSH in two dimensions for various values of $p$.}\label{fig:vis}
\end{figure}

The first main step in our new approach to kernel LSH is to simulate 
Poisson hyperplane processes (PHPs). These are stochastic processes for generating random hyperplanes in $\R^d$. The sampled hyperplanes induce a partition of $\R^d$ called a \emph{mosaic} \cite{schneider2008stochastic}. The mosaic cells will be our LSH bins. 

The appeal of PHPs in LSH for $p$-exponential kernels is the following. For a compact  subset $A\subset\R^d$ let $\HH(A)$ denote the set of all affine hyperplanes that intersect it. 
The number of samples from $\HH(A)$ in a PHP admits a Poisson distribution with parameter $\lambda_A:=\int_{\HH(A)}d\nu$ where $\nu$ is the intensity measure of the PHP. Two points $x,y\in\R^d$ will form a hash collision if no hyperplanes cross the interval $I_{xy}$ that connects them. This occurs with probability $\Pr[\Poi(\lambda_{I_{xy}})=0]=\exp(-\lambda_{I_{xy}})$. Thus, if we could choose the PHP intensity $\nu$ such that $\lambda_{I_{xy}}=\norm{x-y}_p$, we would be done.

For the $2$-exponential kernel it now seems natural to sample hyperplanes with uniformly random directions---e.g., via i.i.d.~Gaussian entries---and a uniformly random shift from the origin. For $p<2$, the natural generalization is to replace the Gaussian entries with i.i.d.~$p$-stable entries. Indeed, these choices lead to the desired identity $\lambda_{I_{xy}}=\norm{x-y}_p$. Unfortunately, the resulting PHP is not computationally realizable, since $\R^d$ itself is not compact. Simulating the PHP would mean sampling an infinite number of random hyperplanes. Thus, it does not yield a computable LSH scheme.

Our solution to this issue is to restrict the LSH to a compact region $C$ and develop an efficient sampling scheme for the restriction of the PHP to $C$. 
At first blush, it may seem like one could simply sample the hyperplane directions with the same $p$-stable laws and just restrict the random shift to be within the boundaries of $C$. However, this approach fails, as it skews the PHP intensity measure and breaks the key identity $\lambda_{I_{xy}}=\norm{x-y}_p$. Rather, one must identify the actual marginal intensity on $C$ and devise a method to efficiently sample from it. 

Our second main technical step is to show that this is possible for the hypercube $C=\HC:=[-1,1]^d$. We show that the marginal PHP intensity on $\HC$ takes the form of an \emph{$\ell_1$-biased $p$-stable law}. We proceed to show how to sample from it efficiently. For $p=2$, sampling a hyperplane from this law amounts to replacing a single (uniformly random) coordinate in the Gaussian vector with a sample from a two-sided variant of the Rayleigh distribution. For $p<2$, we do not obtain a closed-form sampling procedure, but we are able to present an efficient rejection sampling scheme. 
This enables us to simulate the desired PHP on $\HC$ and obtain the kernel LSH schemes that prove \Cref{thm:main}.

\section{The Mosaic LSH for $p$-Exponential Kernels}\label{sec:lsh}
\subsection{Preliminaries}\label{sec:prelim}

\textbf{Poisson hyperplane processes.}
We introduce notation and review some basic properties of PHPs. We refer to the textbook \cite{schneider2008stochastic} for further background and terminology. 

Let $\HH(A)$ be the set of all affine hyperplanes intersecting $A\subset\R^d$. 
We let
\[ H_{z,t}=\{x\in\R^d:z^Tx=t\} 
\]
denote the hyperplane with normal direction $z\in\R^d$ and shift $t\in\R$ from the origin. It will be convenient to allow $z$ to be unnormalized (thus $z,t$ determine $H_{z,t}$ up to multiplication by a scalar). 

A PHP is a Poisson point process \cite{kingman1992poisson,last2018lectures} over $\HH(A)$. It is defined by an intensity measure $\nu$ over $\R^d\times\R$. For every compact $A\subset\R^d$ with finite non-zero $\nu(\HH(A))$, the number $N(A)$ of hyperplanes intersecting $A$ in the PHP is distributed like $\Poi(\lambda_A)$, where
    \[
      \lambda_A := \int_{\HH(A)}d\nu .
    \]
Conditioned on $N(A)$, the PHP hyperplanes in $\HH(A)$ are distributed like i.i.d.~samples with density
\[\bar{\nu}^A:=\lambda_A^{-1}\cdot \nu|_{\HH(A)} , \]
which is the normalized restriction of $\nu$ to $\HH(A)$. 
This is useful for simulating the PHP.

\begin{lemma}[PHP simulation]\label{lmm:simulation}
    Suppose we have an algorithm $\Pi_A$ for sampling from $\bar\nu^A$ in time $\mathcal{T}_A$. Then, we can simulate the restriction of a PHP with intensity $\nu$ to $A$ in expected time $O(\lambda_A\mathcal{T}_A)$ by sampling $N(A)\sim\Poi(\lambda_A)$ and then running $\Pi_A$ for $N(A)$ times. 
\end{lemma}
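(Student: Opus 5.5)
The plan is to prove \Cref{lmm:simulation} in two parts: a correctness argument that rests on the standard characterization of a Poisson point process, followed by a short running-time calculation.

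\textbf{Correctness.} The restriction of a PHP with intensity $\nu$ to $\HH(A)$ is again a Poisson point process, with intensity $\nu|_{\HH(A)}$, which has finite total mass $\lambda_A$. I will use the standard fact about such processes (see \cite{kingman1992poisson,last2018lectures}): a Poisson point process with a finite intensity measure $\mu$ of mass $m$ has the same law as the following mixed binomial construction. Draw $N\sim\Poi(m)$, then draw $N$ i.i.d.\ points from $\mu/m$.

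The argument is the usual check through void probabilities or counts. For any measurable $B\subseteq\HH(A)$, the number of sampled hyperplanes falling in $B$ is, conditioned on $N$, distributed as $\mathrm{Bin}(N,\nu(B)/\lambda_A)$. Poisson thinning then shows that these counts are $\Poi(\nu(B))$. Applying the same reasoning to a multinomial split shows that the counts are independent across disjoint sets $B_1,\dots,B_k$. These two properties characterize the Poisson process with intensity $\nu|_{\HH(A)}$. They are also exactly what the preliminaries state: $N(A)\sim\Poi(\lambda_A)$, and conditioned on $N(A)$ the hyperplanes are i.i.d.\ with density $\bar\nu^A$. The algorithm reproduces this joint law exactly. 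It samples $N(A)$ first and then runs $\Pi_A$ that many times, each run with independent randomness, so its output has the law of the PHP restricted to $A$.

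\textbf{Running time.} Let $N=N(A)$. The cost of the algorithm is the cost of drawing $N$ plus $N$ independent calls to $\Pi_A$. By Wald's identity, or simply by conditioning on $N$ and using linearity of expectation, the expected cost of the calls is $\E[N]\cdot\mathcal{T}_A=\lambda_A\mathcal{T}_A$. This holds even if $\mathcal{T}_A$ is only an expected time, for instance when $\Pi_A$ uses rejection sampling, because each call is independent of $N$.

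\textbf{Main obstacle.} I expect the only delicate step to be accounting for the time needed to sample $N\sim\Poi(\lambda_A)$ itself. I would use the simple sequential inversion method. Starting from $k=0$, it computes $\Pr[N\le k]$ incrementally via $\Pr[N=k+1]=\Pr[N=k]\cdot\lambda_A/(k+1)$ and stops when a single uniform $U$ falls below the running sum. This takes $O(N+1)$ arithmetic steps, so its expected cost is $O(\lambda_A+1)$, which is $O(\lambda_A\mathcal{T}_A)$ whenever $\lambda_A\gtrsim 1$ and $\mathcal{T}_A\ge 1$. Alternatively, I would generate arrival times of a unit-rate Poisson process on $[0,\lambda_A]$ by summing exponential variables, which has the same cost. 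In the regime $\lambda_A<1$ the additive $O(1)$ term remains, and I would either absorb it into the stated bound (the applications have $\lambda_A=\Theta(d/\sigma)$) or state the bound as $O(1+\lambda_A\mathcal{T}_A)$. Adding the two contributions completes the proof.
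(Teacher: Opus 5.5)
Your proposal is correct and matches the reasoning the paper relies on. The paper gives no separate proof: it treats the lemma as immediate from the two preliminary facts, that $N(A)\sim\Poi(\lambda_A)$ and that, conditioned on $N(A)$, the hyperplanes are i.i.d.\ with density $\bar\nu^A$, and it gets the running time from $\E[N(A)]=\lambda_A$. Your extra accounting for the $O(1+\lambda_A)$ cost of drawing $N(A)$ itself, and the additive $O(1)$ term it leaves when $\lambda_A<1$, is a legitimate refinement that the paper leaves implicit.
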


\textbf{$p$-Stable distributions.}
We denote by $\mathcal D_p$ the distribution with characteristic function $e^{-|t|^p}$. Note that $\mathcal{D}_2$ is the Gaussian distribution $N(0,2)$. 
More generally, $\mathcal D_p$ is known to exist for $0<p\leq2$. It is well-known how to sample from $\mathcal D_p$ in time $O(1)$ via the Chambers-Mallows-Stuck (CMS) transform \cite{chambers1976method}, which generalizes the classical Box-Muller transform for Gaussians \cite{scott2011box}.

Let $\mathcal{D}_p^{\otimes d}$ denote the distribution over $\R^d$ with i.i.d. entries from $\mathcal{D}_p$. 
If $Z$ is sampled from $\mathcal{D}_p^{\otimes d}$, then $Z^Tu$ is distributed like $\norm{u}_p\mathcal{D}_p$ for every fixed $u\in\R^d$. 
This is the $p$-stable property of $\mathcal{D}_p$.
 
Let 
$m_p:=\E_{X\sim\mathcal{D}_p}|X|$
denote the first absolute $p$-stable moment. It is known that $m_p$ is finite for $p\in(1,2]$ and admits the closed form $m_p=2\pi^{-1}\Gamma(1-1/p)$ \cite[Theorem 5.1]{janson2011stable}.

\textbf{Kernel bandwidth.}
The $p$-exponential kernel is $\kk(x,y)=\exp(-\norm{x-y}_p/\sigma)$ with bandwidth parameter $\sigma>0$. 
In \Cref{sec:intro} we let $\sigma=1$ for simplicity. In this section, it will be more convenient to assume w.l.o.g.~that our bounded region is the hypercube $\HC=[-1,1]^d$ and keep $\sigma$ explicit. This maintains full generality since \Cref{thm:main} can be applied to any bounded region by scaling $\sigma$. 

\subsection{LSH via PHPs}\label{sec:lshviaphp}
A collection of hyperplanes $\{H_{z^{(j)},t^{(j)}}\}_{j=1}^m$ induces a polygonal space partition called a mosaic. We use the cells of the mosaic to induce a hash function $h$ over $\R^d$. Each point $x\in\R^d$ can be hashed to its cell in time $O(md)$ by computing the length-$m$ bitstring 
$h(x)_j=\mathbf1\{(z^{(j)})^Tx-t^{(j)}\geq0\}$, which encodes whether $x$ is above or below each hyperplane in the collection. See \Cref{fig:vis} for illustration.

Our LSH schemes will use hyperplanes sampled by simulating appropriate PHPs. 
For every pair $x,y\in\R^d$, the collision event $h(x)=h(y)$ occurs if the PHP avoids the set $\HH(I_{xy})$, where $I_{xy}\subset\R^d$ denotes the interval between $x$ and $y$. Since $N(I_{xy})$ is distributed like $\Poi(\lambda_{I_{xy}})$, the avoidance probability is $\exp(-\lambda_{I_{xy}})$. 
To obtain \Cref{thm:main} we need a PHP such that $\lambda_{I_{xy}}=\norm{x-y}_p/\sigma$.

We now define the intensity measure $\nu$ of our PHP over hyperplanes $H_{Z,T}$. It is the product measure where $Z$ has measure
$\frac{1}{\sigma m_p}\mathcal{D}_p^{\otimes d}$ and $T$ has the uniform measure over $\R$. 
Under this choice of $\nu$, we compute $\lambda_{I_{xy}}$ for a given pair $x,y\in\R^d$ as follows (expectations henceforth are over $Z\sim\mathcal{D}_p^{\otimes d}$):
\begin{align*}
  \lambda_{I_{xy}} &= \int_{\HH(I_{xy})}d\nu \\
  &= \frac{1}{\sigma m_p}\E\left[\int_{\min\{Z^Tx,Z^Ty\}}^{\max\{Z^Tx,Z^Ty\}}dt\right] \\
  &= \frac{1}{\sigma m_p}\E\left|Z^Tx-Z^Ty\right| \\
  &= \frac{1}{\sigma m_p}\E|Z^T(x-y)| .
\end{align*}
Recall that the $p$-stable property implies that $Z^T(x-y)$ is distributed like $\norm{x-y}_p\mathcal{D}_p$, and hence
\[ \E[|Z^T(x-y)|]=\norm{x-y}_p\E_{X\sim \mathcal{D}_p}|X|=\norm{x-y}_pm_p . \]
Therefore, $\lambda_{I_{xy}}=\norm{x-y}_p/\sigma$ as desired. 

\subsection{Restriction to the Hypercube}\label{sec:hypercube}
The PHP described above cannot be simulated in finite time over $\R^d$ since $\nu(\HH(\R^d))$ is not finite. 
We now show how to simulate it over the hypercube $\HC=[-1,1]^d$ using \Cref{lmm:simulation}. That lemma requires \emph{(i)} calculating $\lambda_{\HC}$, and \emph{(ii)} developing a sampling algorithm to draw a sample from $\bar\nu^\HC$. 

We start with $\lambda_{\HC}$. Observe that for every fixed $z\in\R^d$, the inner product $z^Tx$ is maximized over $x\in\HC$ where all coordinates of $x$ have absolute value 1 and their signs match the respective coordinates of $z$. The same holds for minimizing $z^Tx$ but with the signs of $x$ flipped. Therefore,
\begin{equation}\label{eq:l1}
  l_z(\HC) := \min_{x\in\HC}z^Tx = -\sum_{i=1}^d|z_i|=-\norm{z}_1 \quad \text{and} \quad u_z(\HC) :=\max_{x\in\HC}z^Tx =\sum_{i=1}^d|z_i|=\norm{z}_1 .
\end{equation}
Therefore: 
\begin{equation}\label{eq:lambdaq}
    \lambda_{\HC} = \int_{\HH(\HC)}d\nu  
    = \frac{1}{\sigma m_p}\E\left[\int_{l_Z(\HC)}^{u_Z(\HC)}dt\right] 
    = \frac{1}{\sigma m_p}\E\left[2\sum_{i=1}^d|Z_i|\right]
    = \frac{2}{\sigma m_p}\sum_{i=1}^d\E|Z_i|
 = \frac{2d}{\sigma},
\end{equation}
%
where we have used that each $Z_i$ has distribution $\mathcal{D}_p$, and is independent of the other coordinates $\{Z_j:j\neq i\}$ of $Z$, and therefore $\E|Z_i|=m_p$. Plugging \Cref{eq:lambdaq} in the simulation lemma \Cref{lmm:simulation}, we find that simulating the PHP from \Cref{sec:lshviaphp} would take expected time $O(\mathcal{T}_{\HC}\cdot d/\sigma)$ if we had an algorithm for sampling from $\bar\nu^\HC$ in time $\mathcal{T}_{\HC}$. 

Next, we show how to sample from $\bar\nu^\HC$, which we recall is the probability density proportional to the restriction of $\nu$ to $\HC$. We denote by $\bar\nu^{\HC}_{Z}(z)$ the $Z$-marginal density of $\bar\nu^\HC$ and by $\bar\nu^{\HC}_{T|Z}(t|z)$ the density of $T$ in $\HC$ conditioned on $Z$. Thus we have $\bar\nu^{\HC}(t,z) = \bar\nu^{\HC}_{T|Z}(t|z)\cdot \bar\nu^{\HC}_{Z}(z)$. Observe that $\bar\nu^{\HC}_{T|Z}(t|z)$ is simply the uniform measure over shifts $t\in\R$ such that $H_{z,t}$ intersects $\HC$. Therefore, we can sample from $\bar\nu^\HC$ by first sampling $z$ from $\bar\nu^{\HC}_Z$ and then sampling $t$ uniformly from $[l_Z(\HC),u_Z(\HC)]$.

It thus remains to show how to draw a sample from $\bar\nu^{\HC}_Z$. For each fixed $z\in\R^d$ we have
\begin{equation}\label{eq:vqz}
    \bar\nu_Z^{\HC}(z) = \int_{-\infty}^\infty d\bar\nu^{\HC}(z,t) 
    = \frac{1}{\lambda_{\HC}}\cdot\frac{1}{\sigma m_p}\cdot \mathcal{D}_p^{\otimes d}(z)\int_{l_z(\HC)}^{u_z(\HC)}dt 
    = \frac{\sigma}{2d}\cdot\frac{1}{\sigma m_p}\cdot \mathcal{D}_p^{\otimes d}(z) \cdot 2\norm{z}_1 ,
\end{equation}
where we have plugged (\ref{eq:l1}) and (\ref{eq:lambdaq}) for the final equality. We see that the probability density $\bar\nu_Z^{\HC}(z)$ 
equals $\frac{1}{dm_p}\norm{z}_1\mathcal{D}_p^{\otimes d}$. This
is proportional to $\norm{z}_1\mathcal{D}_p^{\otimes d}$, i.e., it is an \emph{$\ell_1$-biased $p$-stable distribution}. 
This is an instance of a size-biased distribution, a familiar topic in probability theory. We refer to \cite{arratia2019size} for background on size-biased distributions. 

In summary, we proved the following lemma:
\begin{lemma}\label{lmm:reductiontosampling}
    Suppose we have an algorithm for sampling $z\in\R^d$ with density $\frac{1}{dm_p}\norm{z}_1\mathcal{D}_p^{\otimes d}$ in time $\mathcal{T}$. Then we can simulate in expected time $O(\mathcal Td/\sigma)$ a PHP over $\HC$ where the avoidance probability of $\HH(I_{xy})$ is $\exp(-\norm{x-y}_p/\sigma)$ for every $x,y\in\HC$. 
\end{lemma}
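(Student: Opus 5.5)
The plan is to assemble the pieces already established in \Cref{sec:lshviaphp} and \Cref{sec:hypercube}, and to invoke \Cref{lmm:simulation} with $A=\HC$. First, restate the intensity measure: $\nu$ is the product of $\frac{1}{\sigma m_p}\mathcal{D}_p^{\otimes d}$ on normals $Z$ and the uniform (Lebesgue) measure on shifts $T$. By \Cref{eq:lambdaq}, $\lambda_{\HC}=2d/\sigma$, which is finite and nonzero, so the restriction of the PHP to $\HH(\HC)$ is well defined. I then exhibit a sampler for $\bar\nu^{\HC}$ in two stages. In the first stage, draw $z$ from $\bar\nu^{\HC}_Z$, which by \Cref{eq:vqz} is exactly the $\ell_1$-biased law $\frac{1}{dm_p}\norm{z}_1\mathcal{D}_p^{\otimes d}$; the hypothesized algorithm does this in time $\mathcal T$. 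In the second stage, draw $t$ uniformly from $[-\norm{z}_1,\norm{z}_1]$, using \Cref{eq:l1}, in time $O(d)$. The product of this marginal and this conditional is $\bar\nu^{\HC}$, because the conditional density of $T$ given $Z=z$ is the Lebesgue measure restricted to the shifts for which $H_{z,t}$ meets $\HC$. Hence $\mathcal T_{\HC}=O(\mathcal T+d)$.

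Next, apply \Cref{lmm:simulation}: sample $N(\HC)\sim\Poi(2d/\sigma)$ and call the sampler $N(\HC)$ times. This gives expected time $O(\lambda_{\HC}\mathcal T_{\HC})=O((\mathcal T+d)d/\sigma)$. Since writing down $z$ already takes $\Omega(d)$ time, $\mathcal T=\Omega(d)$, so this is $O(\mathcal T d/\sigma)$. Sampling the Poisson variable takes $O(1+\lambda_{\HC})$ expected time, which is absorbed into the same bound.

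For the avoidance probability, fix $x,y\in\HC$. By convexity of $\HC$, the segment $I_{xy}$ lies in $\HC$, so $\HH(I_{xy})\subseteq\HH(\HC)$. The restricted process therefore agrees with the full PHP on $\HH(I_{xy})$: a Poisson process restricted to a measurable subset is again Poisson with the restricted intensity. Consequently $N(I_{xy})\sim\Poi(\lambda_{I_{xy}})$, and the computation in \Cref{sec:lshviaphp} gives $\lambda_{I_{xy}}=\norm{x-y}_p/\sigma$. The probability that no hyperplane hits $I_{xy}$ is then $\exp(-\norm{x-y}_p/\sigma)$.

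The one step needing care is confirming that the two-stage sampler reproduces $\bar\nu^{\HC}$ exactly. This reduces to checking that the normalization in \Cref{eq:vqz} integrates to one, which follows from $\E\norm{Z}_1=dm_p$. I also need to verify that hyperplanes merely tangent to $\HC$, or to the endpoints of $I_{xy}$, form a $\nu$-null set, so that boundary conventions do not affect any of the probabilities.
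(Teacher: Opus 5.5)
Your proposal is correct and follows the paper's own argument: compute $\lambda_{\HC}=2d/\sigma$, factor $\bar\nu^{\HC}$ into the $\ell_1$-biased $Z$-marginal and the uniform conditional of $T$ on $[-\norm{z}_1,\norm{z}_1]$, plug this sampler into \Cref{lmm:simulation}, and take the avoidance probability from the computation $\lambda_{I_{xy}}=\norm{x-y}_p/\sigma$ in \Cref{sec:lshviaphp}. Your additional remarks fill in details the paper leaves implicit: the $O(d)$ cost of computing $\norm{z}_1$ absorbed via $\mathcal T=\Omega(d)$, convexity of $\HC$ giving $\HH(I_{xy})\subseteq\HH(\HC)$ so the restricted process suffices, and the $\nu$-null boundary sets.
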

As a consequence, we get:
\begin{corollary}\label{cor:lsh}
Suppose we have an algorithm for sampling $z\in\R^d$ with density $\frac{1}{dm_p}\norm{z}_1\mathcal{D}_p^{\otimes d}$ in time $\mathcal{T}$. Then, we have an LSH scheme over $\HC$ with sampling time $O(\mathcal Td/\sigma)$, evaluation time $O(d^2/\sigma)$, and collision probability $\exp(-\norm{x-y}_p/\sigma)$.  
\end{corollary}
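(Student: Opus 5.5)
The plan is to derive \Cref{cor:lsh} directly from \Cref{lmm:reductiontosampling} together with the mosaic hashing construction of \Cref{sec:lshviaphp}. The construction is as follows. First, we sample a hash function by invoking the simulation algorithm of \Cref{lmm:reductiontosampling}. This samples $N=N(\HC)\sim\Poi(2d/\sigma)$ and then $N$ i.i.d.\ hyperplanes $H_{z^{(j)},t^{(j)}}$ from $\bar\nu^{\HC}$. Each $z^{(j)}$ is drawn from the $\ell_1$-biased $p$-stable density in time $\mathcal T$, and each $t^{(j)}$ is drawn uniformly from $[-\norm{z^{(j)}}_1,\norm{z^{(j)}}_1]$ in $O(d)$ time. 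We store the list of pairs $(z^{(j)},t^{(j)})$ as the finite representation of $h$, and define $h(x)\in\{0,1\}^N$ by $h(x)_j=\mathbf 1\{(z^{(j)})^Tx-t^{(j)}\ge 0\}$. Since $\mathcal T\ge d$ (the sampler must at least output $d$ coordinates), the expected sampling time is $O(\mathcal T d/\sigma)$, exactly as given by \Cref{lmm:reductiontosampling}.

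For the collision probability, fix $x,y\in\HC$. By convexity, $I_{xy}\subset\HC$, so every hyperplane crossing $I_{xy}$ lies in $\HH(\HC)$. The simulated process therefore agrees on $\HH(I_{xy})$ with the restriction of the full PHP. We then check that $h(x)=h(y)$ holds iff no sampled hyperplane separates $x$ from $y$. If some $H_{z,t}$ meets $I_{xy}$, the affine function $z^Tw-t$ changes sign (or vanishes) along the segment, so the two points receive different bits. The one delicate case is a hyperplane passing through exactly one endpoint, or containing the segment. These cases occur on a $\nu$-null set and can be ignored. Conversely, if no hyperplane meets $I_{xy}$, each affine function has constant strict sign on the segment, so all bits agree. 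Hence
\[
\Pr[h(x)=h(y)]=\Pr[N(I_{xy})=0]=\exp(-\norm{x-y}_p/\sigma),
\]
where the last equality is the avoidance probability from \Cref{lmm:reductiontosampling}.

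For evaluation, computing each bit costs one inner product plus one comparison, i.e.\ $O(d)$, so evaluating $h$ takes $O(Nd)$. The number $N$ is independent of the point $x$ and satisfies $\E N=\lambda_{\HC}=2d/\sigma$ by \eqref{eq:lambdaq}, so the expected evaluation time is $O(d^2/\sigma)$.

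The only step needing real care is the measure-zero boundary argument in the collision analysis. It requires that $\nu$ assigns zero mass to hyperplanes through a fixed point. This holds because $T$ has Lebesgue measure conditionally on $Z$, so any fixed value of $t=z^Tx$ has probability zero.
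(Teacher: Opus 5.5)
Your proposal is correct and follows the same route as the paper. Collision probability comes from the avoidance identity of \Cref{lmm:reductiontosampling} applied to the bitstring mosaic hash of \Cref{sec:lshviaphp}, sampling time from $N\sim\Poi(2d/\sigma)$ hyperplanes at cost $\mathcal T$ each, and evaluation time from $O(Nd)$ with $\E N=2d/\sigma$; your extra remarks (that $I_{xy}\subset\HC$ by convexity, and that hyperplanes through an endpoint form a $\nu$-null set so $h(x)=h(y)$ coincides almost surely with $N(I_{xy})=0$) fill in details the paper leaves implicit.
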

\begin{proof}
    \Cref{sec:lshviaphp} shows that the PHP from \Cref{lmm:reductiontosampling} induces an LSH scheme with the desired collision probability. 
    Sampling a hash function from the LSH scheme means simulating the PHP. 
    Using \Cref{lmm:simulation}, the simulation time is $O(N\mathcal{T})$ where $N$ is the number of sampled hyperplanes. Since $N$ is distributed like $\Poi(2d/\sigma)$, the expected LSH sampling time is $O(\mathcal{T}d/\sigma)$.
    The time to evaluate a sampled LSH map $h$ on a point $x\in\R^d$ is the time it takes to compute the inner product of $x$ with the normal of each hyperplane. Each inner product takes time $O(d)$, and since the number of hyperplanes $N$ is in expectation $2d/\sigma$, the total expected time to evaluate $h$ on $x$ is $O(d^2/\sigma)$. 
\end{proof}

\subsection{Sampling $\ell_1$-Biased $p$-Stable Vectors}\label{sec:l1biased_pstable_sampling}
Our goal in this section is to develop a method to sample a vector $z\in\R^d$ with density proportional to $\norm{z}_1\mathcal{D}_p(z)$. By \Cref{cor:lsh}, this will complete the sampling algorithm for $\bar\nu^\HC$, which will allow us to simulate the PHP on $\HC$ with \Cref{lmm:simulation}, thus proving \Cref{thm:main}. 

Formally, this section proves the following lemma.

\begin{lemma}\label{lmm:samplesizebiased}
 For every $p\in(1,2]$ there is an algorithm for sampling $z\in\R^d$ with density $\frac{1}{dm_p}\norm{z}_1\mathcal{D}_p^{\otimes d}$ in time $O(d)$ plus expected time $O(1)$. For $p=2$,
 the time is just $O(d)$.     
\end{lemma}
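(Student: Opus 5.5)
The plan is to decompose the $\ell_1$-biased density as a uniform mixture over coordinates. Since $\norm{z}_1=\sum_{i=1}^d|z_i|$, we can write
\[
\frac{1}{dm_p}\norm{z}_1\mathcal{D}_p^{\otimes d}(z)=\frac1d\sum_{i=1}^d\Big(\frac{|z_i|}{m_p}\mathcal{D}_p(z_i)\Big)\prod_{j\neq i}\mathcal{D}_p(z_j).
\]
Each summand is a product density: coordinate $i$ follows the size-biased law $\tilde{\mathcal D}_p(s):=m_p^{-1}|s|\mathcal{D}_p(s)$, which is a probability density because $m_p=\E|X|$. The remaining coordinates are i.i.d.\ $\mathcal{D}_p$. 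The sampler therefore draws $i$ uniformly from $[d]$, draws $z_j\sim\mathcal{D}_p$ for each $j\neq i$ by the CMS transform in $O(1)$ each, and draws $z_i\sim\tilde{\mathcal D}_p$. This costs $O(d)$ deterministic time plus one draw from the one-dimensional law $\tilde{\mathcal D}_p$. The whole lemma thus reduces to sampling $\tilde{\mathcal D}_p$ in expected $O(1)$ time.

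For $p=2$, $\mathcal{D}_2=N(0,2)$, so $\tilde{\mathcal D}_2(s)\propto|s|e^{-s^2/4}$. This is a two-sided Rayleigh law. We sample $|z_i|=\sqrt{-4\ln U}$ with $U$ uniform on $(0,1]$ by inverse CDF, since $\Pr[|S|>r]=e^{-r^2/4}$, and attach an independent uniform sign. This takes deterministic $O(1)$ time, giving total time $O(d)$.

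For $p\in(1,2)$ there is no closed form for $\mathcal{D}_p$, so I would use rejection sampling built on the CMS representation rather than on the density. CMS writes $X=g(\Theta,W)$ with $\Theta$ uniform on $(-\pi/2,\pi/2)$ and $W\sim\mathrm{Exp}(1)$ independent. Size-biasing $X$ is the same as reweighting the pair $(\Theta,W)$ by $|g(\Theta,W)|/m_p$. In the symmetric CMS formula, $|g(\theta,w)|=a(\theta)\,w^{-(1-p)/p}=a(\theta)\,w^{(p-1)/p}$, where $a(\theta)=|\sin(p\theta)|\,|\cos\theta|^{-1/p}\,|\cos((1-p)\theta)|^{(1-p)/p}$. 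The reweighting therefore factorizes. The $W$-part becomes $w^{(p-1)/p}e^{-w}$, which is a Gamma$(1+(p-1)/p)$ law, samplable in expected $O(1)$ time by standard Gamma samplers such as Marsaglia--Tsang. The $\Theta$-part becomes a density on $(-\pi/2,\pi/2)$ proportional to $a(\theta)$.

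The remaining step, and the one I expect to be the main obstacle, is sampling from the density proportional to $a(\theta)$ in expected $O(1)$ time. Near $\theta=\pm\pi/2$, $a(\theta)$ blows up like $|\cos\theta|^{-1/p}$. This is integrable because $1/p<1$, but it rules out rejection from a uniform proposal. My first attempt would be rejection from a proposal proportional to $|\cos\theta|^{-1/p}$, or more simply to $(\pi/2-|\theta|)^{-1/p}$, which can be inverted in closed form. The acceptance ratio is then bounded by $\sup_\theta|\sin(p\theta)|\,|\cos((1-p)\theta)|^{(1-p)/p}$ times a comparison constant. This supremum is a finite constant depending only on $p$, since $|(1-p)\theta|<\pi/2$ keeps the last cosine away from zero. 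A bounded acceptance constant gives an expected $O(1)$ number of trials.

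Having sampled $(\Theta,W)$ from the reweighted product law, we set $z_i=g(\Theta,W)$. By the change of variables, $z_i$ has density $\tilde{\mathcal D}_p$. Putting the coordinate mixture, the CMS draws for the other coordinates, and this one-dimensional sampler together gives time $O(d)$ plus expected $O(1)$, which proves the lemma.
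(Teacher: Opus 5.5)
Your proposal is correct and follows the paper's proof step for step. The uniform mixture over the biased coordinate is \Cref{lmm:sizebiasedp}, and the $p=2$ case is the same two-sided Rayleigh inverse-CDF sampler. For $p\in(1,2)$ the paper likewise reweights the CMS pair, obtaining $\bar W\sim\mathrm{Gamma}(2-1/p,1)$ and a $\bar\Theta$-density proportional to $|F_p(\theta)|$, which is your $a(\theta)$.

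The one genuine difference is the rejection proposal for $\bar\Theta$ (\Cref{lmm:rejection}). The paper uses $V=S\arccos(Y)$ with $Y\sim\mathrm{Beta}(1-1/p,1)$, whose density is $\tfrac12(1-\tfrac1p)|\sin\theta|(\cos\theta)^{-1/p}$. The extra factor $|\sin\theta|$ is exactly what makes your first candidate $|\cos\theta|^{-1/p}$ invertible in closed form, via $Y=\cos\theta$. It also mirrors the linear vanishing of $\sin(p\theta)$ at $0$. The paper expands $\sin(p\theta)=\sin(\theta+\theta')$ with $\theta'=(p-1)\theta$ and uses that $\tan x/x$ is increasing. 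This gives the envelope $m\le p\le 2$, uniformly in $p$.

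Your proposal $\propto(\tfrac{\pi}{2}-|\theta|)^{-1/p}$, combined with $\cos\theta\ge\tfrac{2}{\pi}(\tfrac{\pi}{2}-|\theta|)$, is equally valid and equally cheap for each fixed $p$, which is all the lemma requires. Its only weakness is that your stated bound $\cos((p-1)\theta)^{-(p-1)/p}\le\cos((p-1)\tfrac{\pi}{2})^{-(p-1)/p}$ blows up as $p\to2^-$. That comes from bounding the two factors separately. The same angle-addition argument gives $|\sin(p\theta)|\cos((p-1)\theta)^{-(p-1)/p}\le p$. So your proposal in fact also has a $p$-independent envelope, at most $p\pi/2$.
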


To start, we observe that the additivity of the $\ell_1$ norm is useful for reducing the problem of sampling according to the $\ell_1$-biased law from sampling a vector to sampling a single coordinate. 

\begin{lemma}\label{lmm:sizebiasedp}
    We can sample $z\in\R^d$ with density $\frac{1}{dm_p}\norm{z}_1\mathcal{D}_p(z)$ by the following process:
    \begin{CompactEnumerate}
                \item Sample $I$ uniformly from $\{1,\ldots,d\}$.
        \item For all $j\neq I$, sample $Z_j$ from $\mathcal D_p$ (scalar $p$-stable).
        \item Sample $Z_I$ with density  $m_p^{-1}|z|\mathcal D_p(z)$ (scalar magnitude-biased $p$-stable).
        \item Return the sampled vector $(Z_1,\ldots,Z_d)$. 
    \end{CompactEnumerate}
\end{lemma}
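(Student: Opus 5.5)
The plan is to compute the density of the vector $(Z_1,\ldots,Z_d)$ output by the process and check that it equals the target $\frac{1}{dm_p}\norm{z}_1\mathcal{D}_p^{\otimes d}(z)$. The approach is a mixture decomposition: the output is a uniform mixture over $I\in\{1,\ldots,d\}$ of $d$ component distributions, so its density is the average of the component densities. The proof then reduces to writing each component density explicitly and summing, using the additivity $\norm{z}_1=\sum_i|z_i|$.

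First, I will check that step 3 is well defined. Since $m_p=\E_{X\sim\mathcal D_p}|X|$ is finite for $p\in(1,2]$, the function $m_p^{-1}|z|\mathcal D_p(z)$ is nonnegative and integrates to $1$ over $\R$. Hence it is a valid probability density on $\R$.

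Second, conditioned on $I=i$, the coordinates are sampled independently. $Z_j\sim\mathcal D_p$ for $j\neq i$, and $Z_i$ has density $m_p^{-1}|z_i|\mathcal D_p(z_i)$. So the conditional joint density is the product
\[
m_p^{-1}|z_i|\mathcal D_p(z_i)\prod_{j\neq i}\mathcal D_p(z_j)=m_p^{-1}|z_i|\,\mathcal D_p^{\otimes d}(z).
\]

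Third, averaging over $I$ with $\Pr[I=i]=1/d$ gives the unconditional density
\[
\sum_{i=1}^d\frac1d\cdot m_p^{-1}|z_i|\,\mathcal D_p^{\otimes d}(z)=\frac{1}{dm_p}\Big(\sum_{i=1}^d|z_i|\Big)\mathcal D_p^{\otimes d}(z)=\frac{1}{dm_p}\norm{z}_1\mathcal D_p^{\otimes d}(z).
\]
This is exactly the target. Here I read the $\mathcal D_p(z)$ in the statement as the product density $\mathcal D_p^{\otimes d}(z)$, consistent with \Cref{lmm:reductiontosampling}. As a by-product, this also reconfirms the normalization constant computed in \Cref{eq:vqz}.

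There is no real obstacle in this lemma. The only points needing care are the finiteness of $m_p$, so that step 3 defines a genuine density, and stating clearly that the output is a uniform mixture of the $d$ conditional laws. The actual difficulty, sampling the scalar magnitude-biased law in step 3 efficiently, lies outside this lemma and is deferred to the rest of \Cref{lmm:samplesizebiased}.
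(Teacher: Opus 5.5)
Your proposal is correct and follows the same route as the paper: a total-probability (mixture) decomposition over $I$, the conditional product density $m_p^{-1}|z_i|\,\mathcal D_p^{\otimes d}(z)$, and summing with $\norm{z}_1=\sum_i|z_i|$. Your extra remarks, that finiteness of $m_p$ makes step 3 a genuine density and that $\mathcal D_p(z)$ should be read as $\mathcal D_p^{\otimes d}(z)$, are sensible clarifications the paper leaves implicit.
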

\begin{proof}
    We calculate the density $p_Z(z)$ of $z\in\R^d$ sampled with the above sampling process:
    \begin{align*}
        p_Z(z) &= \sum_{i=1}^d\Pr[I=i]\cdot p_{Z|I}(z|i) & \text{total probability formula} \\
        &= \sum_{i=1}^d\frac1d\cdot \prod_{j=1}^dp_{Z_j|I}(z|i) & \text{$I$ is uniformly random} \\
        &= \sum_{i=1}^d\frac1d\left(\frac{|z_i|}{m_p}\mathcal D_p(z_i) \cdot \prod_{j\neq i}\mathcal D_p(z_j)\right) & \text{$Z_i\sim\frac{|z_i|}{m_p}\mathcal{D}_p$, and $Z_j\sim\mathcal{D}_p$ for $j\neq i$} \\
        &= \frac{1}{dm_p} \mathcal{D}_p^{\otimes d}(z)\sum_{i=1}^d|z_i| & \text{rearranging} \\
        &=\frac{1}{dm_p}\norm{z}_1\mathcal{D}_p^{\otimes d}(z) & \text{rearranging}
    \end{align*}
    as needed.
\end{proof}
At this point, to complete the sampling algorithm for \Cref{lmm:samplesizebiased}, it remains to show how to sample a scalar $z$ from the magnitude-biased $p$-stable density $m_p^{-1}|z|\mathcal{D}_p(z)$ (step 3 in \Cref{lmm:sizebiasedp}). 

\paragraph{The $p=2$ case.} In this case, we can get a closed-form sampling solution. 
Recall that the 2-stable distribution is $N(0,2)$ and its density is $\frac{1}{2\sqrt{\pi}}e^{-z^2/4}$.  
The corresponding magnitude-biased density is $\frac14|z|e^{-z^2/4}$. It is closely related to the Rayleigh distribution with scale $\sqrt2$, which is supported on $[0,\infty)$ with density $\tfrac12ze^{-z^2/4}$. 
The cumulative density function (CDF) of $\mathrm{Rayleigh}(\sqrt2)$ is $1-e^{-z^2/4}$, and therefore we can sample $R\sim \mathrm{Rayleigh}(\sqrt2)$ by drawing $U\sim(0,1)$ uniformly at random and returning $R=2\sqrt{-\ln(1-U)}$. 
Since we need density $\frac14|z|e^{-z^2/4}$ over $\R$ rather than $\frac12ze^{-z^2/4}$ over $[0,\infty)$, we can sample a random sign $S\sim\mathrm{Uniform}\{\pm1\}$ and return the ``two-sided Rayleigh'' sample $SR$. This is a sample from the magnitude-biased 2-stable density, $m_2^{-1}|z|\mathcal{D}_2(z)$. 

\paragraph{The $1<p<2$ case.}
For this case, we recall the CMS transform \cite{chambers1976method}, which gives a method for sampling $p$-stables by expressing $\mathcal{D}_p$ as the push-forward measure of closed-form distributions. 
Let
\[
  F_p(\theta) := \frac{\sin(p\theta)}{(\cos \theta)^{1/p}\cdot\cos((p-1)\theta)^{(p-1)/p}}
  \quad \text{and} \quad
  T_p(\theta,w) := F_p(\theta)\cdot w^{(p-1)/p} .
\]
\begin{lemma}[CMS transform for $p$-stables \cite{chambers1976method}]\label{lmm:cms}
    Let $\mu$ be the product distribution $\mathrm{Uniform}(-\tfrac{\pi}2,\tfrac{\pi}2)\times\mathrm{Exp}(1)$ and let $(\Theta,W)\sim\mu$.  
    Then $Z=T_p(\Theta,W)$ is a sample from $\mathcal{D}_p$. 
\end{lemma}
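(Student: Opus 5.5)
The plan is to identify the law of $Z=T_p(\Theta,W)$ through its tail function and match it against Zolotarev's integral representation of the symmetric $p$-stable distribution function. This is exactly the route by which Chambers, Mallows and Stuck derive their transform, so one could also cite \cite{chambers1976method} directly. I also need to check that the normalization is right, namely that their standard symmetric output has characteristic function $e^{-|t|^p}$ and is therefore $\mathcal D_p$.

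\textbf{Step 1 (sign and symmetry).} For $\theta\in(-\tfrac\pi2,\tfrac\pi2)$ and $p\in(1,2)$, both $\cos\theta>0$ and $\cos((p-1)\theta)>0$. Moreover $\sin(p\theta)$ has the sign of $\theta$, since $|p\theta|<\pi$. Hence $F_p$ is odd and $\operatorname{sign}(Z)=\operatorname{sign}(\Theta)$. Because $\Theta$ is symmetric and independent of $W$, the law of $Z$ is symmetric, and it suffices to compute $\Pr[Z>x]$ for $x>0$.

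\textbf{Step 2 (conditional tail).} Set $V_p(\theta):=F_p(\theta)^{-p/(p-1)}$ for $\theta\in(0,\tfrac\pi2)$. Conditioning on $\Theta=\theta>0$ and using $\Pr[W>s]=e^{-s}$, we get
\[
\Pr[Z>x]=\frac1\pi\int_0^{\pi/2}\Pr\bigl[W^{(p-1)/p}>x/F_p(\theta)\bigr]\,d\theta=\frac1\pi\int_0^{\pi/2}\exp\bigl(-x^{p/(p-1)}V_p(\theta)\bigr)\,d\theta .
\]
Expanding $V_p$ gives
\[
V_p(\theta)=\Bigl(\tfrac{\sin(p\theta)}{\cos\theta}\Bigr)^{p/(1-p)}\tfrac{\cos((p-1)\theta)}{\cos\theta},
\]
which is Zolotarev's function for $\alpha=p$ and skewness $\beta=0$.

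\textbf{Step 3 (Zolotarev's formula; the main obstacle).} It remains to show that $X\sim\mathcal D_p$ satisfies the same tail identity. I would start from Fourier inversion for the density,
\[
f(x)=\frac1\pi\,\mathrm{Re}\int_0^\infty e^{-itx}e^{-t^p}\,dt .
\]
The idea is to deform the contour in the complex $t$-plane onto the curve on which $itx+t^p$ is real; this curve is parametrized by $\theta$ via $t=r(\theta)e^{i\varphi(\theta)}$. After the substitution, the integrand becomes $\exp\bigl(-x^{p/(p-1)}V_p(\theta)\bigr)$ times an explicit Jacobian factor, and the density takes the form $\frac{d}{dx}$ of the expression in Step 2, up to sign. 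Integrating in $x$ from $x$ to $\infty$ then yields the tail formula.

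The delicate points are the following:
\begin{itemize}
\item justifying the contour rotation, which requires decay of $e^{-t^p}$ in the relevant sector when $p>1$;
\item confirming that the scale matches $e^{-|t|^p}$ exactly, with no extra factor such as $\cos(\pi p/2)^{1/p}$. This is the usual convention pitfall, so I would check it against a known value, for instance the limit $x\to0$, which gives $\Pr[X>0]=\tfrac12$ on both sides, or the density at $0$, which is $\Gamma(1+1/p)/\pi$.
\end{itemize}
As a sanity check, at $p=2$ the formula reduces to $F_2(\theta)=2\sin\theta$, so $Z=2\sin\Theta\sqrt W$. This is the Box–Muller form of $N(0,2)=\mathcal D_2$ and confirms the normalization.

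\textbf{Conclusion.} Since the tails agree for every $x>0$ and both laws are symmetric, $Z$ and $X$ have the same distribution function. Therefore $Z\sim\mathcal D_p$.
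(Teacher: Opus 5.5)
Your proposal is correct and follows the route the paper relies on. The paper gives no argument of its own and simply cites Chambers--Mallows--Stuck, whose standard derivation is your Steps 1--2: you read Zolotarev's $\beta=0$ representation $\Pr[X>x]=\frac1\pi\int_0^{\pi/2}\exp\bigl(-x^{p/(p-1)}V_p(\theta)\bigr)\,d\theta$ as the probability that $\Theta>0$ and $W>(x/F_p(\Theta))^{p/(p-1)}$. This representation holds for the law with characteristic function $e^{-|t|^p}$, with no extra scale factor when $\beta=0$, and your $V_p$ matches it.

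One correction to your normalization checks. The limit $x\to0$ cannot detect a scale factor, since $\Pr[cX>0]=\tfrac12$ for every $c>0$. The $p=2$ check cannot detect a factor like $|\cos(\pi p/2)|^{1/p}$ either, because that factor equals $1$ at $p=2$. So only the density-at-zero comparison $f(0)=\Gamma(1+1/p)/\pi$ genuinely tests the scale, and it does hold on the $Z$ side.
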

By \Cref{lmm:cms} and a standard change of variables for push-forward measures, we get the following:
\begin{corollary}\label{cor:samplebiasedp}
    Let $\bar\mu(\bar \theta,\bar w)$ be the probability density proportional to $|T_p(\bar \theta,\bar w)|\cdot \mu(\bar \theta,\bar w)$. If $(\bar \Theta,\bar W)$ are sampled from $\bar\mu$, then $\bar Z=T_p(\bar \Theta,\bar W)$ has density proportional to $|z|\cdot\mathcal{D}_p(z)$.
\end{corollary}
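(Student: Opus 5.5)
The plan is a direct change-of-variables argument for push-forward measures, using \Cref{lmm:cms} as the only input. Write $\mu$ for the density of $(\Theta,W)$ on $(-\tfrac\pi2,\tfrac\pi2)\times(0,\infty)$. By \Cref{lmm:cms}, the push-forward $T_p{}_\#\mu$ equals $\mathcal D_p$. I will show that reweighting $\mu$ by $|T_p|$ and then pushing forward gives $\mathcal D_p$ reweighted by $|z|$. I will phrase this with test functions rather than Jacobians, since $T_p$ maps two dimensions to one and is not invertible.

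\textbf{Step 1: the normalizing constant.} The density $\bar\mu$ is $\bar\mu = c^{-1}|T_p|\mu$ with
\[ c=\int |T_p(\theta,w)|\,\mu(\theta,w)\,d\theta\,dw=\E_{(\Theta,W)\sim\mu}|T_p(\Theta,W)| . \]
Because $T_p(\Theta,W)\sim\mathcal D_p$, this equals $\E_{X\sim\mathcal D_p}|X| = m_p$. This is finite for $p\in(1,2]$, as recalled in the preliminaries, so $\bar\mu$ is a well-defined probability density. This finiteness is the only place where the restriction $p>1$ is needed.

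\textbf{Step 2: identify the law of $\bar Z$.} Fix a bounded measurable test function $g:\R\to\R$. Applying the push-forward identity $\E_\mu[\phi(T_p)]=\E_{\mathcal D_p}[\phi(X)]$ to the function $\phi(z)=g(z)|z|$ gives
\[ \E[g(\bar Z)] = \frac1{m_p}\int g(T_p(\theta,w))\,|T_p(\theta,w)|\,\mu(\theta,w)\,d\theta\,dw = \frac1{m_p}\int g(z)\,|z|\,\mathcal D_p(z)\,dz . \]
The function $\phi$ is not bounded, but it is integrable against $\mathcal D_p$ by Step 1. To justify the identity for it, I would first apply it to $g_+$ and $g_-$ separately, where it holds for nonnegative measurable functions by the definition of the push-forward and monotone convergence, and then subtract. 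Since the identity holds for all bounded $g$, $\bar Z$ has density $m_p^{-1}|z|\mathcal D_p(z)$, which is proportional to $|z|\mathcal D_p(z)$.

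\textbf{Main obstacle.} There is no real difficulty here. The only care needed is the integrability justification in Step 2, which Step 1 supplies. One should also note that the set where $\cos\theta=0$ or $\cos((p-1)\theta)=0$ has $\mu$-measure zero, so $T_p$ is defined $\mu$-almost everywhere.
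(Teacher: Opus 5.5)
Your proposal is correct and is essentially the paper's argument: the paper justifies this corollary in one line by \Cref{lmm:cms} together with the standard change of variables for push-forward measures, which is exactly your identity $\E_\mu[\phi(T_p)]=\E_{X\sim\mathcal D_p}[\phi(X)]$ applied to $\phi(z)=g(z)|z|$. You also spell out two points the paper leaves implicit: that the normalizing constant equals $m_p$ (finite for $p\in(1,2]$), and the integrability argument needed to apply the identity to the unbounded $\phi$.
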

Both $\mu$ and $|T_p(\cdot,\cdot)|$ decompose over their two coordinates, 
hence $\bar\mu\propto|T_p|\cdot\mu$ is a product measure on $\bar\Theta$ and $\bar W$, so 
we can sample $\bar\Theta$ and $\bar W$ independently. 
Since the $W$-marginal of $\mu$ is $\mathrm{Exp}(1)$, whose density is $e^{-w}$, the $\bar W$-marginal of $\bar\mu$ has density $w^{(p-1)/p}e^{-w}$, which is just $\mathrm{Gamma}(1+\tfrac{p-1}{p},1)$. 

For $\bar\Theta$, while we do not have a closed form for the density, we can derive an efficient rejection sampling scheme. 
Recall that in rejection sampling we have a target density $f_{\bar\Theta}(\theta)$, a proposal density $g_V(\theta)$, and an envelope bound $m>0$ that satisfies $f_{\bar\Theta}(\theta)\leq m \cdot g_V(\theta)$ for all $\theta$ in the support of $\bar\Theta$. We iteratively sample $V$ with density $g_V$, accept the sample with probability $f_{\bar\Theta}(V)/(m\cdot g_V(V))$, and otherwise we reject the sample and repeat. An accepted sample is distributed with density $f_{\bar\Theta}(\theta)$, and the expected number of samples until a sample is accepted is at most $m$. 

Recall that $Y\sim\mathrm{Beta}(\beta,1)$ is supported on $(0,1)$ with density $\beta y^{\beta-1}$. Our rejection sampling scheme for $\bar\Theta$ is specified in the following lemma.  The proof is in Appendix \ref{app:rejection}.

\begin{lemma}\label{lmm:rejection}
Let $V$ be a random variable sampled as $V=S\cdot\mathrm{arccos(Y)}$, where $Y\sim\mathrm{Beta}(1-1/p,1)$ and $S\in\{1,-1\}$ is a uniformly random sign. 
Then, rejection sampling with $V$ as the proposal for $\bar\Theta$ has envelope bound $m=p\leq2$. Therefore, a sample of $\bar\Theta$ is returned in expected time $p=O(1)$.  
\end{lemma}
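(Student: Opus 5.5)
The plan is to write down the target density of $\bar\Theta$ and the proposal density of $V$ explicitly, normalizing constants included, and to show that their ratio is at most $p$ on $(-\tfrac\pi2,\tfrac\pi2)$. By the product structure noted after \Cref{cor:samplebiasedp}, $\bar\Theta$ has density $f_{\bar\Theta}(\theta)=c\,|F_p(\theta)|$ on $(-\tfrac\pi2,\tfrac\pi2)$.

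\textbf{Step 1: the constant $c$.} Since $\E|Z|=m_p$ with $Z=F_p(\Theta)W^{(p-1)/p}$, and $\Theta,W$ are independent,
\[
\frac1\pi\int|F_p|\,d\theta=\frac{m_p}{\E W^{(p-1)/p}}=\frac{m_p}{\Gamma(2-1/p)}.
\]
Plugging in $m_p=2\pi^{-1}\Gamma(1-1/p)$ and $\Gamma(2-1/p)=(1-1/p)\Gamma(1-1/p)$ gives $c=\beta/2$, where $\beta:=1-1/p$.

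\textbf{Step 2: the proposal density.} For $\theta\in(0,\tfrac\pi2)$ the event $V=\theta$ corresponds to $S=1$ and $Y=\cos\theta$. The change of variables $y=\cos\theta$, together with the Beta density $\beta y^{\beta-1}$ and the factor $\tfrac12$ from the sign, gives
\[
g_V(\theta)=\tfrac{\beta}{2}\,|\sin\theta|\,(\cos\theta)^{-1/p},
\]
and the same formula holds for negative $\theta$ by symmetry.

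\textbf{Step 3: the ratio.} The factor $(\cos\theta)^{-1/p}$ cancels, leaving
\[
\frac{f_{\bar\Theta}(\theta)}{g_V(\theta)}=\frac{|\sin(p\theta)|}{|\sin\theta|\,\cos((p-1)\theta)^{(p-1)/p}} .
\]
This is even in $\theta$, so it suffices to take $\theta\in(0,\tfrac\pi2)$. There $(p-1)\theta\in(0,\tfrac\pi2)$, so $0<\cos((p-1)\theta)\le1$. Because the exponent $(p-1)/p$ is less than $1$, we have $\cos((p-1)\theta)^{(p-1)/p}\ge\cos((p-1)\theta)$. It therefore suffices to prove
\[
|\sin(p\theta)|\le p\sin\theta\cos((p-1)\theta).
\]

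\textbf{Step 4: the trigonometric inequality.} This is the main step. Expand $\sin(p\theta)=\sin\theta\cos((p-1)\theta)+\cos\theta\sin((p-1)\theta)$. Both terms are nonnegative on this range, so $-\sin(p\theta)\le 0$, and the negative side of the absolute value is trivial. For the positive side we need
\[
\cos\theta\,\sin((p-1)\theta)\le(p-1)\sin\theta\cos((p-1)\theta),
\]
which is equivalent to $\tan(a\theta)\le a\tan\theta$ with $a=p-1\in(0,1]$. This follows from convexity of $\tan$ on $[0,\tfrac\pi2)$ together with $\tan0=0$: writing $a\theta=a\cdot\theta+(1-a)\cdot 0$ gives $\tan(a\theta)\le a\tan\theta+(1-a)\tan 0$.

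\textbf{Conclusion.} The ratio is at most $p$, so the envelope bound is $m=p$. The expected number of proposals is at most $p\le2$. Each proposal costs $O(1)$: one Beta sample via $U^{1/\beta}$, one $\arccos$, and evaluating the ratio above. The expected time is therefore $O(1)$.
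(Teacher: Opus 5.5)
Your proof is correct and follows the paper's argument essentially step for step. You use the same change-of-variables formula for $g_V$, the same cancellation of $(\cos\theta)^{-1/p}$, the same bound $\cos((p-1)\theta)^{(p-1)/p}\ge\cos((p-1)\theta)$, and the same angle-addition reduction to $\tan((p-1)\theta)\le(p-1)\tan\theta$; the paper gets that last inequality from monotonicity of $\tan x/x$ rather than convexity of $\tan$, an equivalent fact.

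The only real difference is the normalizing constant. The paper integrates $|F_p|$ directly with the antiderivative $G_p(\theta)=\cos((p-1)\theta)^{1/p}\cos(\theta)^{1-1/p}$. You instead recover the constant from $\E|Z|=m_p$ via the CMS representation and the cited closed form for $m_p$. That is valid, but it makes the lemma depend on the external moment formula. The paper's computation is self-contained, and combined with CMS it actually re-derives $m_p$.
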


In summary, by \Cref{cor:samplebiasedp}, we can sample a scalar magnitude-biased $p$-stable $z\in\R$, i.e., with density proportional to $|z|\mathcal{D}_p(z)$, as $z=T_p(\bar\Theta,\bar{W})$ where $\bar{W}\sim\mathrm{Gamma}(1+\tfrac{p-1}{p},1)$ and $\bar\Theta$ is sampled with the rejection sampling scheme from \Cref{lmm:rejection}.
In conjunction with \Cref{lmm:sizebiasedp}, the proof of \Cref{lmm:samplesizebiased} is now complete:
\begin{proof}[Proof of \Cref{lmm:samplesizebiased}]
Sample $I\in\{1,\ldots,d\}$ uniformly at random. For $p=2$, Sample $R\sim\mathrm{Rayleigh}(\sqrt 2)$ and $S\sim\{1,-1\}$. Set $z_I=SR$, and sample the other coordinates of $z$ iid from the normal distribution $N(0,2)$. The sampling time per coordinate is $O(1)$. 
For $p\in(1,2)$, 
sample $\bar{W}\sim\mathrm{Gamma}(1+\tfrac{p-1}{p},1)$ and $\bar\Theta$ with the rejection sampling scheme from \Cref{lmm:rejection}, in expected time $O(1)$. Set $z_I=T_p(\bar\Theta,\bar{W})$, and sample the other coordinates of $z$ iid from $\mathcal{D}_p$ with the CMS transform (\Cref{lmm:cms}) in time $O(1)$ per coordinate. 
In both cases, per above, $z_I$ has the magnitude-biased $p$-stable density, and the other coordinates are usual $p$-stables. Therefore, by \Cref{lmm:sizebiasedp}, the returned vector $z$ has density $\frac{1}{dm_p}\norm{z}_1\mathcal{D}^{\otimes d}_p$ as required by \Cref{lmm:samplesizebiased}.
\end{proof}

\subsection{The Final Mosaic LSH Scheme}
To conclude,  
we collect the steps to sample a hash function from the mosaic LSH for the $p$-exponential kernel (with $p\in(1,2]$):
\begin{CompactEnumerate}
    \item[1.] Sample $N\sim\Poi(2d/\sigma)$. 
    \item[2.] For $j=1,\ldots,N$:
    \begin{CompactEnumerate}
        \item[2.1.] Sample $Z^{(j)}\in\R^d$ with the process in \Cref{lmm:sizebiasedp}. 
        \begin{CompactItemize}
            \item For step 2 in \Cref{lmm:sizebiasedp} (sampling a $p$-stable):
            \begin{CompactItemize}
                \item[-] If $p=2$, sample from $N(0,2)$. 
                \item[-] If $p\in(1,2)$, sample $\Theta\sim[-\tfrac{\pi}2,\tfrac{\pi}2]$ and $W\sim\mathrm{Exp}(1)$ and return $T_p(\Theta,W)$ (\Cref{lmm:cms}). 
            \end{CompactItemize} 
            \item For step 3 in \Cref{lmm:sizebiasedp} (sampling a magnitude-biased $p$-stable): 
            \begin{CompactItemize}
                \item[-] If $p=2$, sample $R\sim\mathrm{Rayleigh}(\sqrt2)$ and a uniformly random sign $S$ and return $SR$. 
                \item[-] If $p\in(1,2)$, sample $\bar \Theta$ using the rejection sampling scheme in \Cref{lmm:rejection} and $\bar W\sim\mathrm{Gamma}(1+\tfrac{p-1}{p},1)$ and return $T_p(\bar \Theta, \bar W)$ (\Cref{cor:samplebiasedp}). 
            \end{CompactItemize} 
        \end{CompactItemize}
        \item[2.2.] Sample $T^{(j)}$ uniformly in $[-\norm{Z^{(j)}}_1,\norm{Z^{(j)}}_1]$
    \end{CompactEnumerate}
    \item[3.] The hyperplanes $\{H_{Z^{(j)},T^{(j)}}:j=1,\ldots,N\}$ induce the mosaic LSH hash function.
\end{CompactEnumerate}

\textbf{Proof of \Cref{thm:main}.} 
By \Cref{lmm:samplesizebiased}, the sampling scheme in step 2.1 above returns an $\ell_1$-biased $p$-stable vector in at most expected time $O(d)$. Plugging this in \Cref{cor:lsh} yields \Cref{thm:main}. \qed
 
\section{Applications to Kernel Density Estimation}\label{sec:kde}

The KDE function of a kernel $\kk$ and a dataset $X$ is $KDE_X(y)=\frac1{|X|}\sum_{x\in X}\kk(x,y)$. 
Many efficient methods for KDE in high dimensions are based on kernel LSH schemes. 
\Cref{thm:main} makes these methods available to $p$-exponential kernels with $p\in(1,2]$. We highlight two applications. 

\textbf{Differentially private KDE.}
Differential privacy (DP) is widely considered the gold standard for privacy-preserving computation in machine learning. 
It formalizes aggregate computations that leak almost no information about any particular point in the dataset. 
We refer to \cite{dwork2006calibrating,dwork2014algorithmic} and to Appendix \ref{app:dpkde} for definitions and background.

Coleman and Shrivastava \cite{coleman2021one} and Wagner et al.~\cite{wagner2023fast} proved that kernel LSH schemes imply efficient DP KDE mechanisms in the function release model. In this model, a privacy-preserving representation of the KDE function can be released and queried infinitely many times without further privacy leakage. Their approach enabled DP KDE mechanisms for the 1-exponential kernel through its LSHability, and, by extension, also for the 2-exponential kernel via an efficient $\ell_2\rightarrow\ell_1$ embedding. Backurs et al.~\cite{backurs2024efficiently} gave a different DP KDE mechanism for the $2$-exponential kernel via a more complex reduction to the Gaussian kernel, which in turn was handled in \cite{wagner2023fast} with a different technique. 

\Cref{thm:main} produces the first explicit DP KDE mechanisms for $p$-exponential kernels with $p\in(1,2)$, by plugging it into the LSHability framework of \cite{wagner2023fast} for DP KDE. See Appendix \ref{app:dpkde} for details.

\begin{theorem}\label{thm:dpkde}
Let $\epsilon,\sigma,d>0$, let $\alpha,\eta\in(0,1)$, and let $p\in(1,2)$. 
For every set $X\subset\HC\subset\R^d$ that satisfies $|X|\geq\Omega(\log(1/\eta)/(\alpha^2\epsilon))$, 
the kernel $\kk(x,y)=\exp(-\norm{x-y}_p/\sigma)$ in $\HC$ admits an $\epsilon$-DP KDE mechanism in the function release model for the KDE function of $X$, with additive error $\alpha$, failure probability $\eta$, and expected query time $O(d^2\log(1/\eta)/(\alpha^2\sigma))$. 
\end{theorem}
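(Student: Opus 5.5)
We obtain \Cref{thm:dpkde} by plugging the mosaic LSH of \Cref{thm:main} into the LSH-based DP KDE framework of Wagner et al.~\cite{wagner2023fast}, taken as a black box. That framework needs only three things. It needs an LSH scheme $\mathcal H$ with collision probability exactly $\kk(x,y)$, as in \Cref{def:kernellsh}. It needs a way to sample $h\sim\mathcal H$ and evaluate it. And the hash values must be mappable into a bounded range so that counts can be privatized.

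The mechanism works as follows. Draw $k=O(\log(1/\eta)/\alpha^2)$ independent hash functions $h_1,\dots,h_k$. For each $h_i$, further compose it with a random hash $g_i$ into a range of $O(1)$ buckets (or an appropriately sized range). Build the histogram of counts of $g_i(h_i(x))$ over $x\in X$, and add Laplace noise to each bucket. A query $y$ returns the average over $i$ of the noisy count in bucket $g_i(h_i(y))$, divided by $|X|$, with the standard correction for the extra collisions that $g_i$ introduces.

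Correctness follows because each summand is an unbiased estimator, up to the correction, of $\frac1{|X|}\sum_x\Pr[h(x)=h(y)]=KDE_X(y)$, by \Cref{thm:main}. The estimator is bounded, since counts are normalized by $|X|$. The Laplace noise has scale $O(1/(\epsilon|X|))$ per repetition once we use parallel composition across buckets and split the budget across repetitions appropriately. Hoeffding plus a tail bound on the Laplace noise then give additive error $\alpha$ with probability $1-\eta$, provided $|X|\ge\Omega(\log(1/\eta)/(\alpha^2\epsilon))$. Privacy holds because the released object is the noisy histograms together with the data-independent $h_i,g_i$, so by post-processing every query is free.

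The query time is $k$ evaluations of the mosaic hash, each of expected cost $O(d^2/\sigma)$ by \Cref{thm:main}. This gives $O(d^2\log(1/\eta)/(\alpha^2\sigma))$.

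The main point requiring care is checking that the hypotheses of the framework of \cite{wagner2023fast} match exactly. First, the mosaic hash values are bitstrings of random length $N$, so we must rehash them via $g_i$ into a fixed finite universe without breaking unbiasedness or privacy. Since $g_i$ is data independent, this is fine. Second, the random number of hyperplanes only affects the running time, and only in expectation, so it does not affect privacy; this is why the theorem states an \emph{expected} query time. Finally, the restriction $p\in(1,2)$, excluding $p=2$, reflects only that $p=2$ was previously known via embeddings. The same argument works verbatim for $p=2$.
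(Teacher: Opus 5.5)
Your route is the paper's. Its proof is just the black-box citation of Theorem~D.3 and Proposition~D.4 of \cite{wagner2023fast} with \Cref{thm:main} plugged in, and your sketched mechanism is essentially that framework. Rehashing into $Q'=2$ buckets with the correction $2c/|X|-1$ amounts to using the $\pm1$ random features $s_i(h_i(x))$. Either way, exact unbiasedness needs $g_i$ to collide distinct keys with probability exactly $1/Q'$, e.g.\ by drawing $g_i$ from a pairwise-independent family.

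What you skip is the one point the paper actually checks. The efficiency bounds of that framework carry a $\log B$ term, where $B$ is the number of LSH bins. This is precisely the cost of your $g_i$: sampling a pairwise-independent hash on $N$-bit keys, and evaluating it on $h_i(y)$, each take $\Theta(N)$ time. Your query-time count includes only the $k$ mosaic evaluations, and saying the random $N$ ``only affects the running time'' is not a bound. The missing line is that $B=2^N$ with $N\sim\Poi(2d/\sigma)$, so $\E[\log_2 B]=\E[N]=2d/\sigma$. This is dominated by the $O(d^2/\sigma)$ evaluation of $h_i$, so the stated $O(d^2\log(1/\eta)/(\alpha^2\sigma))$ holds.

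Separately, your noise accounting is internally inconsistent. Every point lands in one bucket of each of the $k$ histograms, so splitting $\epsilon$ over the repetitions forces Laplace scale $O(k/(\epsilon|X|))$ per repetition, not $O(1/(\epsilon|X|))$. This factor $k=\Theta(\log(1/\eta)/\alpha^2)$ is what produces the condition $|X|\geq\Omega(\log(1/\eta)/(\alpha^2\epsilon))$, once you average the $k$ noise terms and apply a tail bound for sums of Laplace variables. With the scale you wrote, the same calculation gives only $|X|\geq\Omega(1/\epsilon)$, so the stated requirement does not follow from your derivation as written. Both issues are repairable slips inside a correct black-box reduction, not a wrong approach.
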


\textbf{KDE with relative error via hashing-based estimators (HBE).}
Charikar and Siminelakis \cite{charikar2017hashing} studied the following form of KDE with relative error. Given $\varepsilon,\mu\in(0,1)$, the goal is to build a data structure that for each $y$ such that $KDE_X(y)\geq\mu$, it returns an estimate of $KDE_X(y)$ within $(1\pm\varepsilon)KDE_X(y)$ with high probability. This is useful for small KDE values $\mu\leq KDE_X(y)\ll \varepsilon$ where an additive error of $\pm\epsilon$ is considered too large. 
By standard concentration, sampling $O(1/\varepsilon^2\mu)$ uniformly random points from $X$ and evaluating the KDE only on them yields a data structure for KDE with this relative error guarantee. 
Remarkably, \cite{charikar2017hashing} showed this can be achieved with a much smaller number of samples, $1/\varepsilon^2\mu^\alpha$ with $\alpha<1$. Their HBE method achieves $\alpha=0.5$ for LSHable kernels (see also Algorithm 1 in \cite{backurs2019space}). Plugging \Cref{thm:main} into the HBE, we obtain this result for $p$-exponential kernels with $1<p<2$ (the cases $p=1,2$ were already possible with HBE). 

\begin{theorem}\label{thm:hbe}
    Let $p\in(1,2)$. The HBE method \cite{charikar2017hashing,backurs2019space} with the kernel LSH schemes from \Cref{thm:main} yields a data structure for KDE with relative error which uses $O(1/\varepsilon^2\sqrt\mu)$ samples ($\alpha=0.5$). 
\end{theorem}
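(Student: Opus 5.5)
The plan is to treat \Cref{thm:hbe} as a black-box instantiation of the HBE framework. The guarantee in \cite{charikar2017hashing,backurs2019space} says the following. Suppose a kernel $\kk$ admits a hashing scheme whose collision probability is exactly $\kk(x,y)$, possibly after raising to a power, i.e.\ $\Pr[h(x)=h(y)]=\kk(x,y)^{\beta}$ for a chosen $\beta$. Then the importance-sampling estimator $Z=\frac{1}{|X|}\sum_{x\in X\cap h^{-1}(h(y))}\kk(x,y)/\Pr[h(x)=h(y)]$ is unbiased for $KDE_X(y)$. Its relative variance is at most $O(1/\sqrt{\mu})$ whenever $KDE_X(y)\geq\mu$. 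Median-of-means over $O(1/\varepsilon^2\sqrt\mu)$ independent copies then gives the $(1\pm\varepsilon)$ guarantee. So the proof consists of checking that the hypotheses of that framework are met by the scheme of \Cref{thm:main}.

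First, I would produce the required family of hashes with collision probability $\kk^{1/2}$. Since $\kk(x,y)^{1/2}=\exp(-\norm{x-y}_p/(2\sigma))$ is itself a $p$-exponential kernel with bandwidth $2\sigma$, \Cref{thm:main} applies directly with $\sigma$ replaced by $2\sigma$. This gives an exact LSH scheme for $\kk^{1/2}$ on $\HC$, with sampling and evaluation time $O(d^2/\sigma)$. This exact identity is precisely what \cite{backurs2019space} (Algorithm 1) needs. Nothing beyond exact collision probabilities is required, so the embedding-distortion losses of prior approximate approaches disappear.

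Second, I would recall the variance computation to confirm the $\alpha=0.5$ exponent. With collision probability $q=\kk^{1/2}$ we have
\[
\E[Z^2]\le \frac{1}{|X|^2}\sum_{x,x'}\frac{\kk(x,y)\kk(x',y)}{q(x,y)q(x',y)}\Pr[h(x)=h(x')=h(y)].
\]
Bounding the triple collision probability by $\min\{q(x,y),q(x',y)\}$ gives terms of the form $\kk(x,y)^{1/2}\kk(x',y)$ (up to symmetry). Summing yields $\E[Z^2]\le KDE_X(y)\cdot \max_x\kk(x,y)^{1/2}\cdot(\ldots)$. The standard argument of \cite{charikar2017hashing} then bounds this by $O(KDE_X(y)^{3/2})$ in the worst case, giving relative variance $O(1/\sqrt{\mu})$. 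I expect this to be the only step needing care. It depends only on $q=\kk^{1/2}$ and the trivial bound on triple collisions, not on the kernel's form, so I would cite it rather than reprove it.

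Finally, Chebyshev plus median amplification gives $O(1/\varepsilon^2\sqrt\mu)$ samples, i.e.\ hash tables. I would also note that the query points must lie in $\HC$, or in any bounded region after rescaling $\sigma$, for \Cref{thm:main} to apply. This completes the instantiation for $p\in(1,2)$.
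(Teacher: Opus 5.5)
Your approach is the paper's. Its proof of \Cref{thm:hbe} is a one-line plug-in of \Cref{thm:main} into HBE. Your observation that $\kk^{1/2}=\exp(-\norm{x-y}_p/(2\sigma))$ is again a $p$-exponential kernel, so \Cref{thm:main} with bandwidth $2\sigma$ gives an exact LSH with collision probability $\kk^{1/2}$, is exactly the detail the paper leaves implicit.

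One step is wrong as written, and it affects the quantity the theorem counts. The HBE estimator of \cite{charikar2017hashing} does not sum over the whole bucket $B=\{x\in X:h(x)=h(y)\}$. It draws a single uniform $x\in B$ and outputs $\frac{|B|}{|X|}\kk(x,y)/\Pr[h(x)=h(y)]=\frac{|B|}{|X|}\kk(x,y)^{1/2}$. Your full-bucket sum is its conditional expectation given $h$, so statistically it is no worse. But each hash table then costs $|B|$ kernel evaluations, with $\E|B|=\sum_{x\in X}\kk(x,y)^{1/2}\geq|X|\cdot KDE_X(y)\geq|X|\mu$. So $O(1/\varepsilon^2\sqrt\mu)$ hash tables do \emph{not} amount to $O(1/\varepsilon^2\sqrt\mu)$ samples (kernel evaluations). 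Samples in that sense are what the theorem, and the paper's comparison with $O(1/\varepsilon^2\mu)$ uniform samples, refer to.

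For the single-sample estimator, the triple-collision bound gives $\E[Z^2]\leq\frac{1}{|X|^2}\sum_{x,x'}\kk(x,y)\min\{\kk(x,y),\kk(x',y)\}^{1/2}\leq KDE_X(y)\cdot\frac{1}{|X|}\sum_{x'}\kk(x',y)^{1/2}\leq KDE_X(y)^{3/2}$. The last step is Jensen (concavity of $\sqrt{\cdot}$). The factor doing the work is this \emph{average}. A factor $\max_x\kk(x,y)^{1/2}$, as in your sketch, can be close to $1$ and would only give relative variance $O(1/\mu)$, no better than uniform sampling. With these two corrections, Chebyshev plus median amplification finishes the argument as you describe.
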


Later work by Charikar et al.~\cite{charikar2020kernel} showed that KDE with relative error can also be obtained from metric LSH rather than kernel LSH, with even better asymptotic bounds. For the $2$-exponential kernel, \cite{charikar2020kernel} obtained $\alpha=0.1$. They did not explicitly handle $p<2$. However, plugging the classical metric LSH schemes for $\ell_p$ metrics \cite{datar2004locality} into their framework yields data structures with $\alpha=0.25$ for $p$-exponential kernels with $p\in[1,2)$. As a result, \Cref{thm:main} does not lead to state-of-the-art asymptotic bounds for relative-error KDE. Nevertheless, the downside of \cite{charikar2020kernel} is that the method involves substantial combinatorial complexity and seems hard to implement. In contrast, the HBE method has been implemented and shown to be practical in several works \cite{siminelakis2019rehashing,backurs2019space,backurs2021faster}. Therefore, \Cref{thm:main} yields the first \emph{practical and provable} method for KDE with a relative-error guarantee for $p\in(1,2)$. 
The next section includes empirical evaluations for \Cref{thm:dpkde,thm:hbe}.

\section{Experimental Evaluation}\label{sec:experiments}
In this section, we include experimental evaluations to validate our theoretical results. 
For the values of $p$ we use $p\in\{1.25,1.5,1.75,2\}$. 
We run experiments on two standard benchmark KDE datasets: Covertype \cite{covertype_31} and U.S.~Census \cite{us_census_data_1990_116}. 
Both have been used in prior work on LSH-based methods for KDE \cite{siminelakis2019rehashing,backurs2019space,backurs2021faster,wagner2023fast}.  
Bandwidth values and other implementation details are specified in Appendix~\ref{app:additional_results} for reproducibility. 
Due to space limits, in this section we present the results for Covertype. The results for Census are qualitatively similar and are in Appendix~\ref{app:additional_results}. Our code is available online at \texttt{\href{https://github.com/gorobarak/pExponentialLSH}{https://github.com/gorobarak/pExponentialLSH}}.

\textbf{(i) Kernel estimation.}
Our first experiment validates eq.~(\ref{eq:lsh}) directly by comparing the empirical collision probability of the mosaic LSH to the true kernel value. The results in \Cref{fig:direct_kernel_est_covertype} show that the error decays quickly as the number of LSH functions sampled increases.

\textbf{(ii) Differentially private KDE.} 
We implement the LSH-based mechanism for DP-KDE from \cite{wagner2023fast}. 
The mechanism uses a kernel LSH scheme to generate random features for the data points and then adds privacy-preserving noise to their average. We implement it with the mosaic LSH schemes from \Cref{thm:main} and with $100$ random features.
For $p=2$, we include as a baseline the same mechanism with the approximate LSH for the 2-exponential kernel that arises by applying an $\ell_2\rightarrow\ell_1$ JL embedding and then using the 1-exponential kernel LSH scheme from \cite{backurs2019space}. For the other values of $p$, our results yield the first explicit DP KDE mechanisms, so there is no prior baseline for comparison. 

The results in \Cref{fig:lsq_eps_est_covertype} show that for all values of $p$, the KDE error decays as the privacy budget $\varepsilon$ increases. For $p=2$, our exact kernel LSH and the JL-based approximate LSH perform very similarly (note that DP KDE is inherently approximate due to its built-in privacy-preserving noise). 

\begin{figure}
    \centering
    \includegraphics[width=0.8\textwidth]{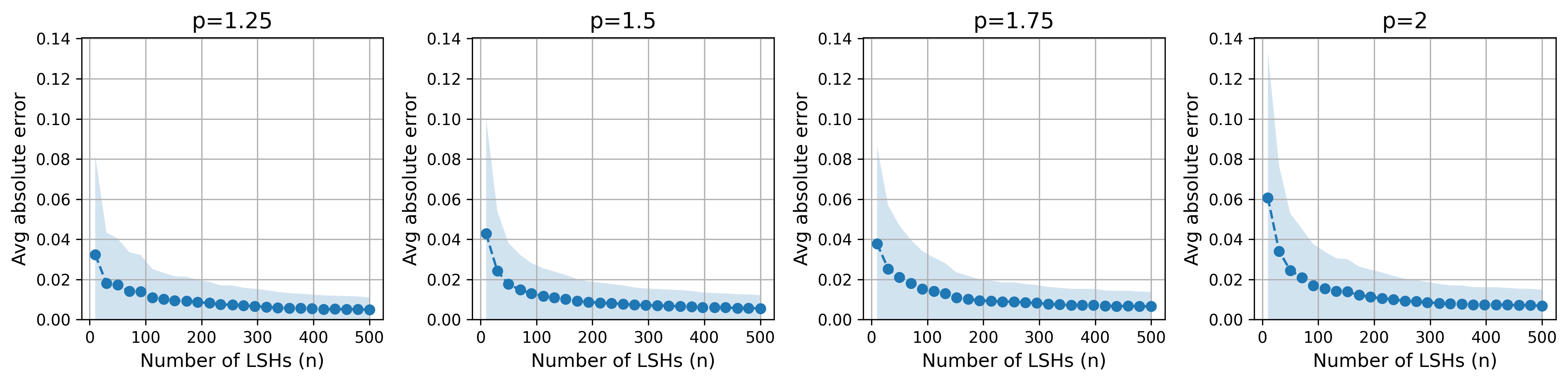}
    \caption{$p$-Exponential kernel estimation by mosaic LSH on the Covertype dataset}
\label{fig:direct_kernel_est_covertype}
\vskip 1em
    \includegraphics[width=0.8\textwidth]{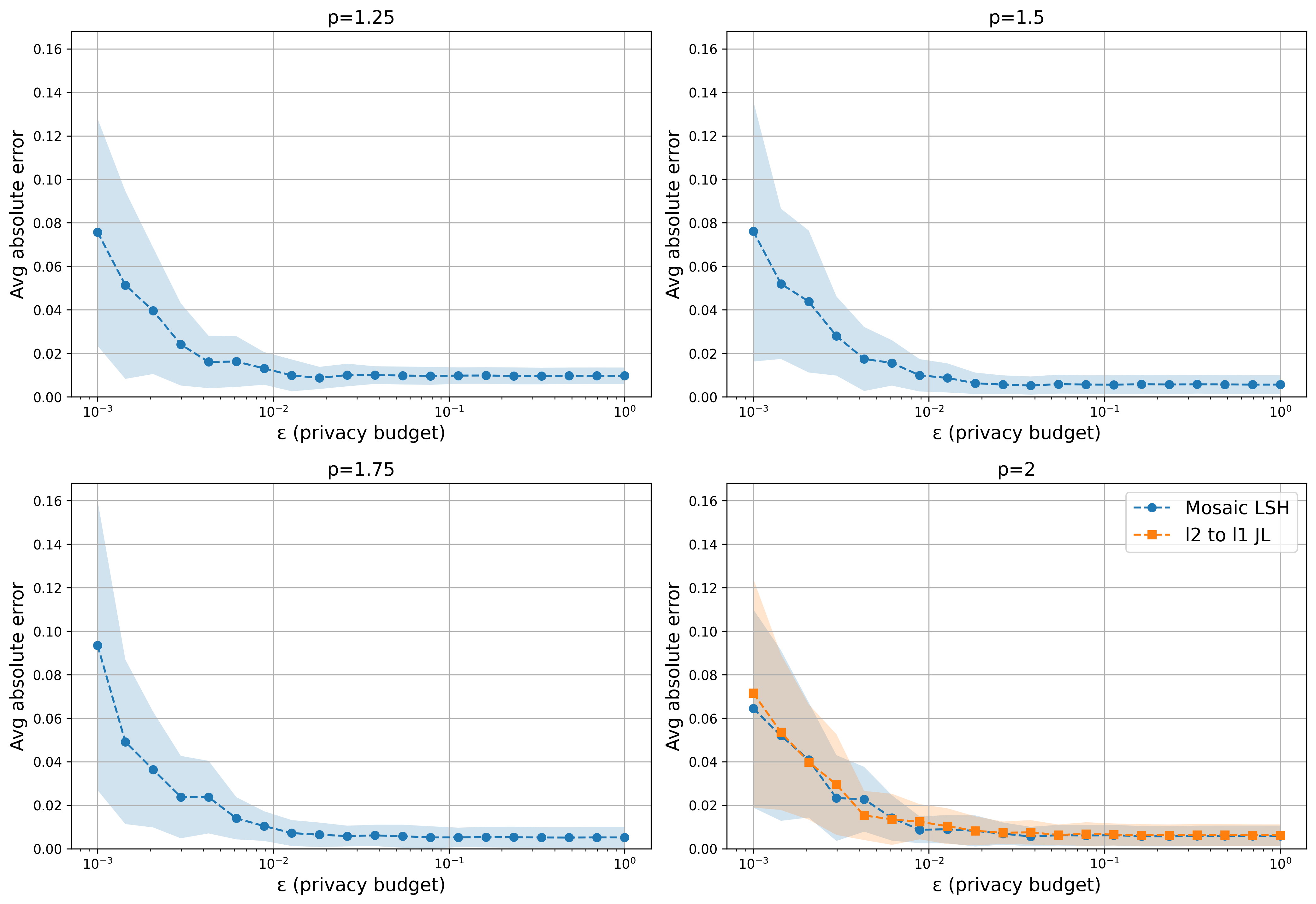}
    \caption{Differentially private KDE results for Covertype}

    \label{fig:lsq_eps_est_covertype}
    \vskip 1em
    \includegraphics[width=0.8\textwidth]{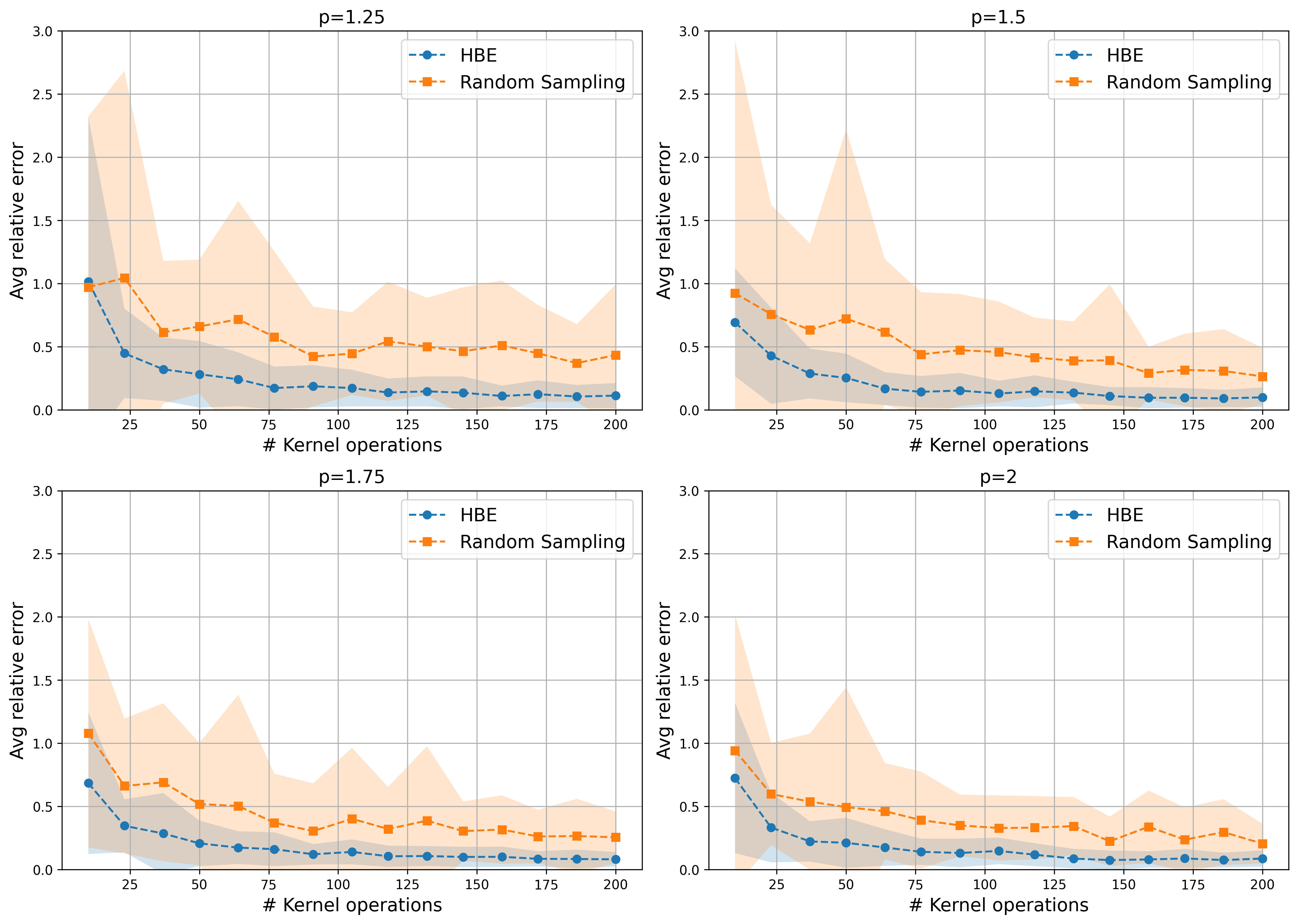}
    \caption{KDE via HBE results for Covertype}
    \label{fig:hbe_est_covertype}
\end{figure}

\textbf{(iii) KDE with relative error via HBE.}
We evaluate the HBE method \cite{charikar2002similarity,siminelakis2019rehashing,backurs2019space} with our mosaic LSH schemes compared to a uniformly random sampling baseline, which, as mentioned in \Cref{sec:kde}, achieves the same relative error guarantee for KDE with a larger asymptotic sample size.  
\Cref{fig:hbe_est_covertype} shows that the HBE method achieves lower error per number of kernel evaluations. The same empirical results were shown in \cite{siminelakis2019rehashing,backurs2019space} only for the 1-exponential and 2-exponential kernels.

\section{Concluding Remarks and Open Questions}\label{sec:conclusion}

We proved exact kernel LSH schemes for $p$-exponential kernels in bounded domains, including in the important case $p=2$ which was previously open. This substantially expands the available kernel LSH results in the literature. As applications, we obtained efficient LSHability-based KDE methods. 

Our techniques depart from previous LSHability techniques which were based on randomly shifted grids. 
Rather, we obtain our results by 
introducing the use of Poisson hyperplane processes into LSH. 
As such, our LSH schemes are not generalizations of the existing LSH schemes for $p=1$. In fact, our techniques work for all $p\in(1,2]$ but \emph{not} for $p=1$. The technical reason is that the 1-stable distribution (Cauchy) has no finite absolute first moment $m_p$, which our method requires.

Two notable limitations of our LSH schemes are the $O(d^2)$ dependence in their running time and their restriction to bounded regions. In comparison, existing LSH schemes for $p=1$ have running time $O(d)$ and they apply to all of $\R^d$, owing to the axis-alignment and translation-invariance of uniform grids in $\R^d$.  Our $\ell_1$-biased $p$-stable PHPs do not possess those properties. 
Whether these aspects of \Cref{thm:main} can be improved, particularly for $p=2$, is an interesting open problem.

\newpage
\section*{AI Use Statement}
During the course of this research, OpenAI's GPT 5.0 through 5.4 were used to assist in literature search, understanding mathematical background and exploring proof ideas. 
The text and the proofs in the paper were written without AI assistance. 
Anthropic's Claude Opus 4.7 was used for interactive coding for the experiments. 
 The core part of the code (the implementation of mosaic LSH) was written without AI assistance, and the code base as a whole was thoroughly verified. 
The authors are accountable for all of the text, proofs and artifacts associated with this paper. 

\begin{ack}
    This research was supported by the Israeli Ministry of Innovation, Science \& Technology, by Len
Blavatnik and the Blavatnik Family foundation, and by an Alon Scholarship.
\end{ack}

\bibliographystyle{plain}
\bibliography{pexplsh}


\newpage
\appendix

\section{Rejection Sampling for $p\in(1,2)$: Proof of \Cref{lmm:rejection}}\label{app:rejection}
Recall that in rejection sampling, we have a target density $f(x)$ we want to sample from and such that we can evaluate it at any given point $x$, and a proposal density $g(x)$ which we can sample from, such that they satisfy the envelope property $\forall x,\; f(x)\leq m\cdot g(x)$ for some $m>0$. The rejection sampler operates as follows:
\begin{enumerate}
    \item Sample $X\sim g$ and $U\sim\mathrm{Uniform}(0,1)$. 
    \item If $f(X)\geq mUg(X)$, accept $X$ as a sample from $f$ and terminate.
    \item Otherwise, reject the sample and return to step 1. 
\end{enumerate}
An accepted sample is provably distributed according to $f$. The expected number of iterations until acceptance is at most $m$. 

Our goal is to devise a rejection sampling scheme for $\bar\Theta$. 
Since the $\Theta$-marginal of $\mu$ is uniform, and $\bar\mu$ is proportional to $|F_p(\theta)w^{(p-1)/p}|\cdot\mu$, the $\bar\Theta$-marginal of $\bar\mu$ is proportional to $|F_p(\theta)|$. We thus need a rejection sampling scheme for sampling $\theta\sim(-\frac{\pi}{2},\frac{\pi}{2})$ with target density proportional to $|F_p(\theta)|$.

We start by calculating the normalization factor. Let $G_p(\theta)$ be defined over $\theta\in[0,\tfrac{\pi}2)$ as
\[
  G_p(\theta) = \cos((p-1)\theta)^{1/p}\cos(\theta)^{1-1/p} . 
\]
One can verify by direct derivation that $\tfrac{d}{d\theta}G_p(\theta)=-(1-\tfrac1p)F_p(\theta)$. Consequently, using the fact that $|F_p(\theta)|$ is symmetric about zero and equals to $F_p(\theta)$ at $\theta\in[0,\tfrac\pi2)$,
\[
  \int_{-\pi/2}^{\pi/2}|F_p(\theta)|d\theta
  = 2\int_{0}^{\pi/2}F_p(\theta)d\theta
  = 2\cdot\frac{-G_p(\pi/2)+G_p(0)}{1-1/p}
  = \frac{2}{1-1/p} .
\]
Therefore, the exact target density we need to sample from is
\[
f_{\bar\Theta}(\theta) = \tfrac12\left(1-\tfrac1p\right)|F_p(\theta)| . 
\]

We now define the proposal density $g_V(v)$ by the following sampling process: Sample $Y\sim\mathrm{Beta}(1-1/p,1)$ and a uniformly random sign $S\in\{1,-1\}$. Let $R=\mathrm{arccos}(Y)$. Return the sample $V=SR$.

We derive a closed form for $g_V(\theta)$. 
Since $Y\sim\mathrm{Beta}(\beta,1)$ takes values in $(0,1)$ with density $\beta y^{\beta-1}$, then $R$ takes values in $(0,\tfrac12\pi)$ and its density $f_R(\theta)$ is, by the change of variables formula for densities with $Y=\cos(R)$,
\[
  f_R(\theta) = \left(1-\tfrac1p\right)(\cos(\theta))^{-1/p}\left|\tfrac{d}{d\theta}\cos(\theta)\right|
  = \left(1-\tfrac1p\right)(\cos(\theta))^{-1/p}\sin(\theta) .
\]
Therefore, after the multiplication by the random sign $S$, we find that $V$ takes values in $(-\tfrac12\pi,\tfrac12\pi)$ with density  
\[
  g_V(\theta) = \tfrac12\left(1-\tfrac1p\right)(\cos(\theta))^{-1/p}|\sin(\theta)| .
\]
Finally, it remains to calculate the envelope parameter $m$. 
Since both the target density $f_{\bar\Theta}(\theta)$ and the proposal density $g_V(\theta)$ are symmetric about zero, it suffices to calculate $m$ for $\theta\in(0,\tfrac12\pi)$. In this range, for $p\in(1,2)$ we have $|\sin(p\theta)|=\sin(p\theta)$. Therefore, for all $\theta\in(0,\tfrac12\pi)$, 
\begin{align*}
    m_\theta := \frac{f_{\bar\Theta}(\theta)}{g_V(\theta)} &= \frac{\tfrac12\left(1-\tfrac1p\right)(\cos(\theta))^{-1/p}(\cos((1-p)\theta))^{-(1-1/p)}\sin(p\theta)}{\tfrac12\left(1-\tfrac1p\right)(\cos(\theta))^{-1/p}\sin(\theta)} \\
    &= \frac{\sin(p\theta)}{\sin(\theta)(\cos((1-p)\theta))^{1-1/p}}.
\end{align*}
Let $\theta'=(p-1)\theta$. Then the above becomes
\begin{align*}
    m_\theta &= \frac{\sin(\theta+\theta')}{\sin(\theta)(\cos(\theta')^{1-1/p}} & \\
    &\leq \frac{\sin(\theta+\theta')}{\sin\theta\cdot\cos\theta'} & (\cos(\theta'))^{1/p}\leq 1 \\
    &= \frac{\sin\theta\cdot\cos\theta' + \sin\theta'\cdot\cos\theta}{\sin\theta\cdot\cos\theta'} & \text{sum-product trigonometric identity} \\
    &= 1+\tan\theta'\cdot\cot\theta. & 
\end{align*}
Since the function $x\mapsto\tan(x)/x$ is increasing on $(0,\tfrac12\pi)$, and $\theta'\leq\theta$, we have
\[
  \tan\theta' \leq \frac{\theta'}{\theta}\cdot\tan\theta = (p-1)\tan\theta,
\]
and therefore the above yields
\[
  m_\theta \leq 1+(p-1)\cdot\tan\theta\cdot\cot\theta = p .
\]
Therefore, the envelope bound for rejection sampling with target density $f_{\bar\Theta}(\theta)$ and proposal density $g_V$ is $m\leq p$, and \Cref{lmm:rejection} is proven.

\section{Omitted Details from \Cref{sec:kde}: DP KDE}
\label{app:dpkde}

We recall the differential privacy definitions relevant to \Cref{thm:dpkde}.

\begin{definition}[differential privacy]
    Let $\epsilon>0$. Let $M$ be a randomized algorithm that maps datasets with elements in $\mathcal{X}$ to a space of outputs $\mathcal{O}$. $M$ is called an $\epsilon$-differentially private ($\epsilon$-DP) mechanism if for every two datasets $X,X'$ such that $X'$ is obtained from $X$ by either adding or dropping a single element, and for every set of outputs $O\subset\mathcal O$, it holds that $\Pr[M(X)\in O]\leq e^\epsilon\Pr[M(X')\in O]$.
\end{definition}

\begin{definition}[DP KDE in the function release model]
    Let $\epsilon>0$ and $\alpha,\eta\in(0,1)$. Let $\kk:\mathcal X\times\mathcal X\rightarrow[0,1]$ be a kernel. In the DP KDE problem in the function release model, the goal is to design an $\epsilon$-DP mechanism $M$ that given a dataset $X\subset\mathcal X$, returns a function $F_X:\mathcal{X}\rightarrow[0,1]$ that for each $y\in\mathcal X$ satisfies $\Pr[|F_X(y)-KDE_X(y)|\leq\alpha]\geq1-\eta$. 
\end{definition}

The mechanisms we obtain in \Cref{thm:dpkde} satisfy the above definitions. \Cref{thm:dpkde} follows from plugging \Cref{thm:main} into Theorem D.3 and Proposition D.4 in \cite{wagner2023fast}. The only thing to check is that the efficiency bounds in Proposition D.4 in \cite{wagner2023fast} depend on $\log B$, where $B$ is the number of bins of the LSH scheme. In our case, the number of bins is $2^N$ where $N\sim\Poi(2d/\sigma)$ is the number of hyperplanes sampled in the mosaic LSH scheme. 
Therefore, $\E[\log_2B]=\E[N]=2d/\sigma$.

\section{Non-LSHability of the Gaussian Kernel}\label{sec:nonlsh}
In this section we show a negative result, proving the lack of LSH schemes for a class of kernels that includes the Gaussian kernel, which is the $p=2$ case of the theorem below.

\begin{theorem}
    Let $p\geq1$ and $d\geq1$. Consider the kernel $\kk_p(x,y) = \exp(-\norm{x-y}_p^p)$ over $\R^d$. Then, $\kk_p$ is LSHable if and only if $p=1$.
\end{theorem}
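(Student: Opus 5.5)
We prove the two directions separately. For $p=1$ we invoke the known constructions. The Random Binning Features scheme of \cite{rahimi2007random}, interpreted as an LSH scheme as observed in \cite{andoni2009dimension}, satisfies \Cref{def:kernellsh} for $\exp(-\norm{x-y}_1)$ over all of $\R^d$. An alternative is to note that $\kk_1(x,y)=\prod_{i}\exp(-|x_i-y_i|)$ factors over coordinates. Each factor is realized by a one-dimensional Poisson point process of unit intensity on $\R$, with points hashed by the interval containing them. Concatenating independent such hashes across coordinates gives the product. So the real content is the ``only if'' direction for $p>1$.

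For $p>1$ we use Charikar's necessary condition, recalled in \Cref{sec:related}: if $\kk$ is LSHable, then $D(x,y):=1-\kk(x,y)=\Pr[h(x)\neq h(y)]$ satisfies the triangle inequality. This holds because $\{h(x)\neq h(z)\}\subseteq\{h(x)\neq h(y)\}\cup\{h(y)\neq h(z)\}$, followed by a union bound, so it is immediate. It then suffices to exhibit three collinear points violating it. Since $\R^d$ contains a copy of the line along the first axis, we reduce to $d=1$. Take $x=0$, $y=t$, $z=2t$, so the kernel is $\exp(-|a|^p)$ along this line. The triangle inequality would require
\[
1-e^{-(2t)^p}\le 2\bigl(1-e^{-t^p}\bigr)\quad\text{for all } t>0.
\]

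The remaining step is to show that this inequality fails for small $t$ whenever $p>1$. We expand both sides for $t\to0$: the left side is $2^pt^p+O(t^{2p})$ and the right side is $2t^p+O(t^{2p})$. Since $2^p>2$ for $p>1$, the left side exceeds the right side for all sufficiently small $t>0$. This contradicts the triangle inequality, so $\kk_p$ is not LSHable. In particular, the Gaussian kernel ($p=2$) is not LSHable.

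There is essentially no serious obstacle. The points to state carefully are, first, that Charikar's condition applies to any hash distribution, including ones with infinitely many or uncountably many buckets, since only the union bound is used. Second, the $p=1$ direction needs an actual scheme meeting the algorithmic requirements of \Cref{def:kernellsh} on unbounded $\R^d$. If that is questioned, we cite \cite{rahimi2007random} directly, since its grid can be represented lazily by a finite random pitch and shift per coordinate.
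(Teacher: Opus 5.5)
Your proposal is correct and follows the paper's proof. Both reduce to $d=1$ and apply Charikar's triangle-inequality condition to the evenly spaced points $0,t,2t$; the paper's $g(0)=0$, $g'(0)=2-2^p<0$ after substituting $u=t^p$ is the same first-order comparison of $2^p t^p$ against $2t^p$ that your expansion makes.
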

\begin{proof}
    The $p=1$ is the 1-exponential kernel whose LSHability is known. We now assume that $p>1$ and show that $\kk$ is not LSHable.
    
    Charikar \cite{charikar2002similarity} showed that if $\kk: \mathcal{X}\times\mathcal{X} \to [0,1]$ is LSHable then $1 - \kk$  satisfies the triangle inequality. Therefore, if $\kk_p$ is LSHable then the following must hold $\forall a,b,c \in \mathcal{X}$ :
\begin{align*}
        & 1-\kk_p(a,b) + 1 - \kk_p(b,c) \geq 1 - \kk_p(a,c)
         \\
        &\iff  \kk_p(a,c) - \kk_p(a,b)- \kk_p(b,c) +1 \geq 0. \numberthis\label{eq:tri_ineq}
\end{align*}

    We construct a contradictory example where \cref{eq:tri_ineq} fails to hold and this will conclude our proof.
    It suffices to prove the claim for $d=1$ since a negative example for $\R$ embeds in $\R^d$ for all $d\geq 1$. 
    Take $\mathcal{X} = \R$ and $a=0, \; b=x, \; c=2x,$ as three evenly spaced points where $x\in \R^+$. 
    Then \cref{eq:tri_ineq} becomes 
    \begin{align*}
    & \kk_p(a,c) - \kk_p(a,b)- \kk_p(b,c) +1 \geq 0\\
    &\iff e^{-\norm{2x}_p^p} - e^{-\norm{x}_p^p} - e^{-\norm{x}_p^p} +1 \geq 0\\
    &\iff e^{-(2x)^p} -2e^{-x^p} + 1\geq 0. \numberthis\label{eq:tri_ineq_x}
    \end{align*}
    We denote the left-hand side of \cref{eq:tri_ineq_x} by $f_p(x) \coloneqq  e^{-(2x)^p} + 1 -2e^{-x^p} $. We show that there is $x\in\R^+$ such that $f_p(x)<0$. We apply a change of variable $u = x^p$
    and define
    \[ g(u) = e^{-2^pu} +1 -2e^{-u}, \]
    so we have $g(u) = f_p(x)$. At $u=0$ we have $g(0)=0$. 
    We calculate the derivative:
    $$g'(u) 
    = -2^pe^{-2^pu} + 2e^{-u}.$$
    At $u=0$,
    $$g'(0) = 2e^{-0} - 2^pe^{-0} = 2- 2^p  < 0,$$
    where the last inequality is since $p>1$. Thus we have $g(0)=0$ and $g'(0)<0$. Therefore, for sufficiently small $u^*>0$, we have $g(u^*) <0$, and thus for $x^* = (u^*)^{\frac{1}{p}} >0$ we get $f_p(x^*) <0 $.
\end{proof}

\section{Additional Experimental Details and Results}\label{app:additional_results}

\paragraph{Dataset parameters.} 
Covertype has 581K points and dimension 55. Census has 2.45M points and dimension 68. 
For efficiency we sample a subset of each dataset (note that subsampling preserves all KDE values). 
For Covertype we sample at rate 25\% and obtain 145K points. For Census we sample at rate 5\% and obtain 123K points. 

\paragraph{Bandwidth selection.}
In line with prior work, we choose two bandwidth settings for each kernel. 
One is chosen to render typical KDE values of order $10^{-2}$ for experiments measuring additive error. We use it for the kernel estimation and DP KDE experiments. 
The other bandwidth setting is chosen to render typical KDE values of order $10^{-3}$ for experiments measuring relative error. We use this setting for HBE experiments since as explained in \Cref{sec:kde}, the method is designed for regimes where KDE values we need to approximate are very small. 

For Covertype, the bandwidth settings are $0.15$ and $0.05$ respectively. For Census, the bandwidth settings are $0.42$ and $0.2$ respectively. Note that those values account for scaling the coordinates of each dataset into the range $[-1,1]$ to align with the normalization in \Cref{thm:main}.

\paragraph{Experimental parameters.} 
The kernel estimation experiments measure average absolute additive error over 1000 random pairs of points. 
The DP KDE experiments measure the average absolute additive error over 100 query points chosen at random from the dataset. 
The HBE experiments measure the average relative error over 100 query points chosen at random from the dataset. 
Error bars ($\pm$std) are shown as shaded regions around each curve.

\paragraph{Additional experimental results.}
The results for the Census datasets are shown in \Cref{fig:direct_kernel_est_census,fig:lsq_eps_est_census,fig:hbe_est_census}.

\begin{figure}
    \centering
    \includegraphics[width=0.8\textwidth]{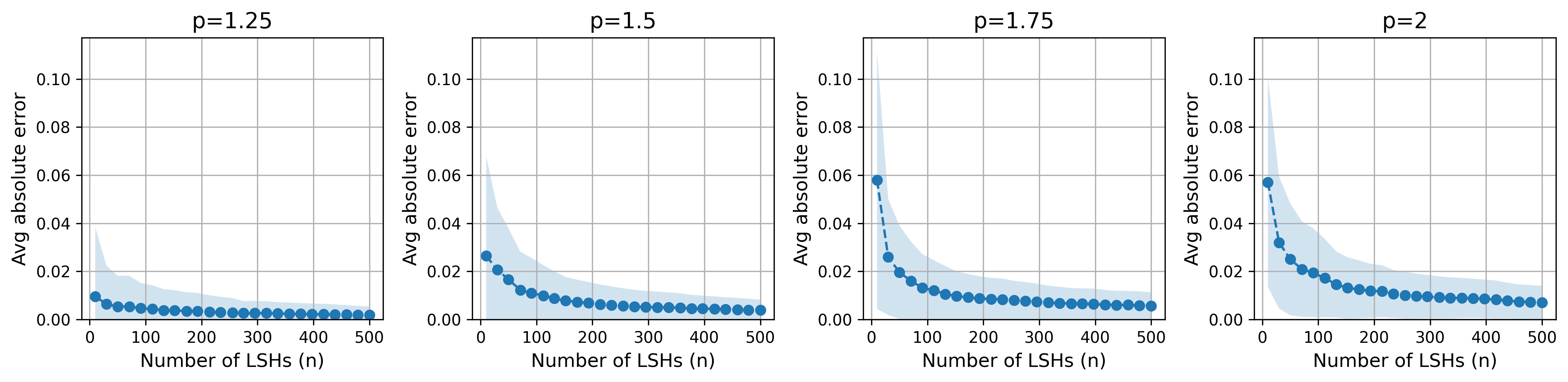}
    \caption{$p$-Exponential kernel estimation by mosaic LSH on the Census dataset}
\label{fig:direct_kernel_est_census}
\vskip 1em
    \includegraphics[width=0.8\textwidth]{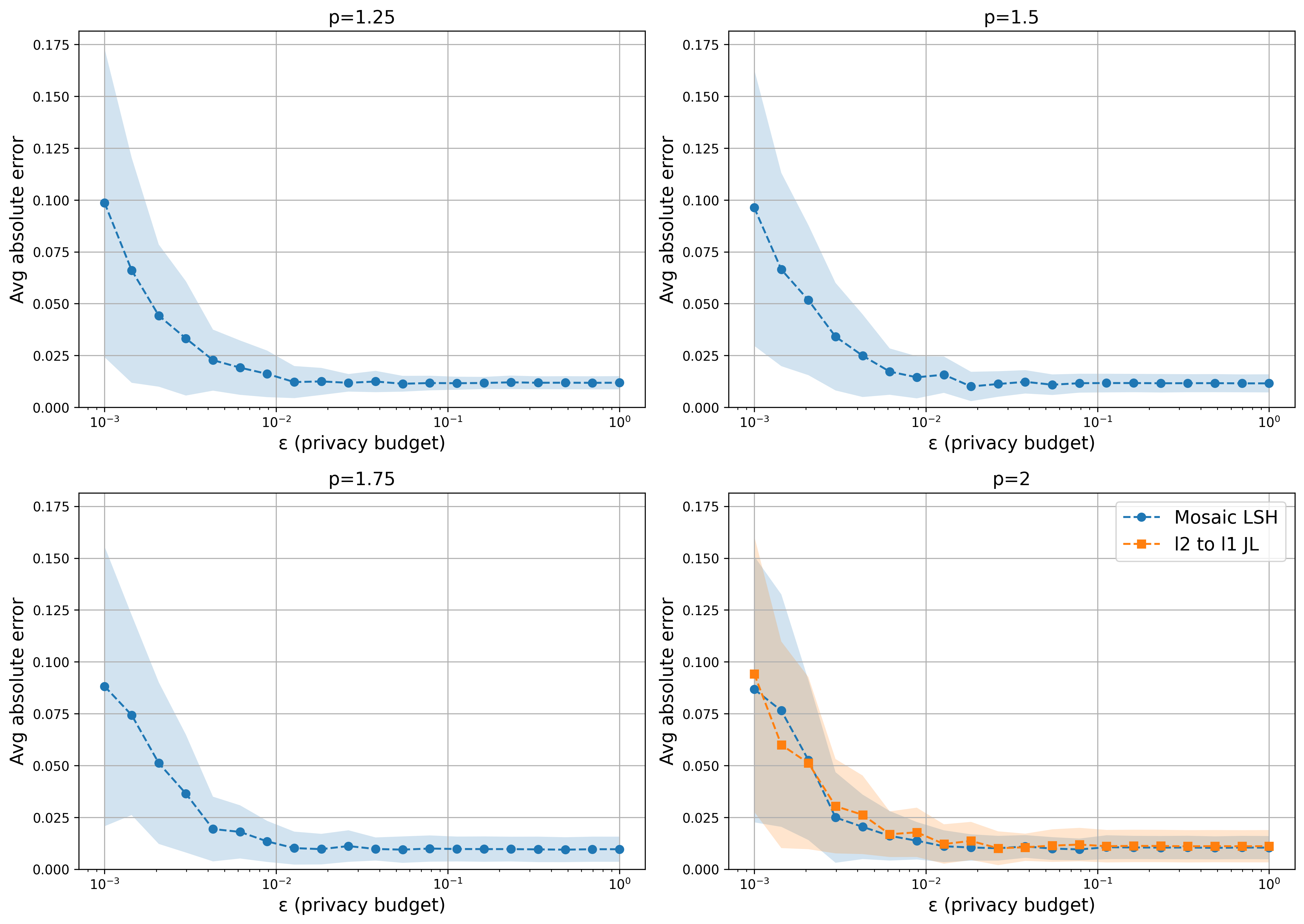}
    \caption{Differentially private KDE results for Census}

    \label{fig:lsq_eps_est_census}
    \vskip 1em
    \includegraphics[width=0.8\textwidth]{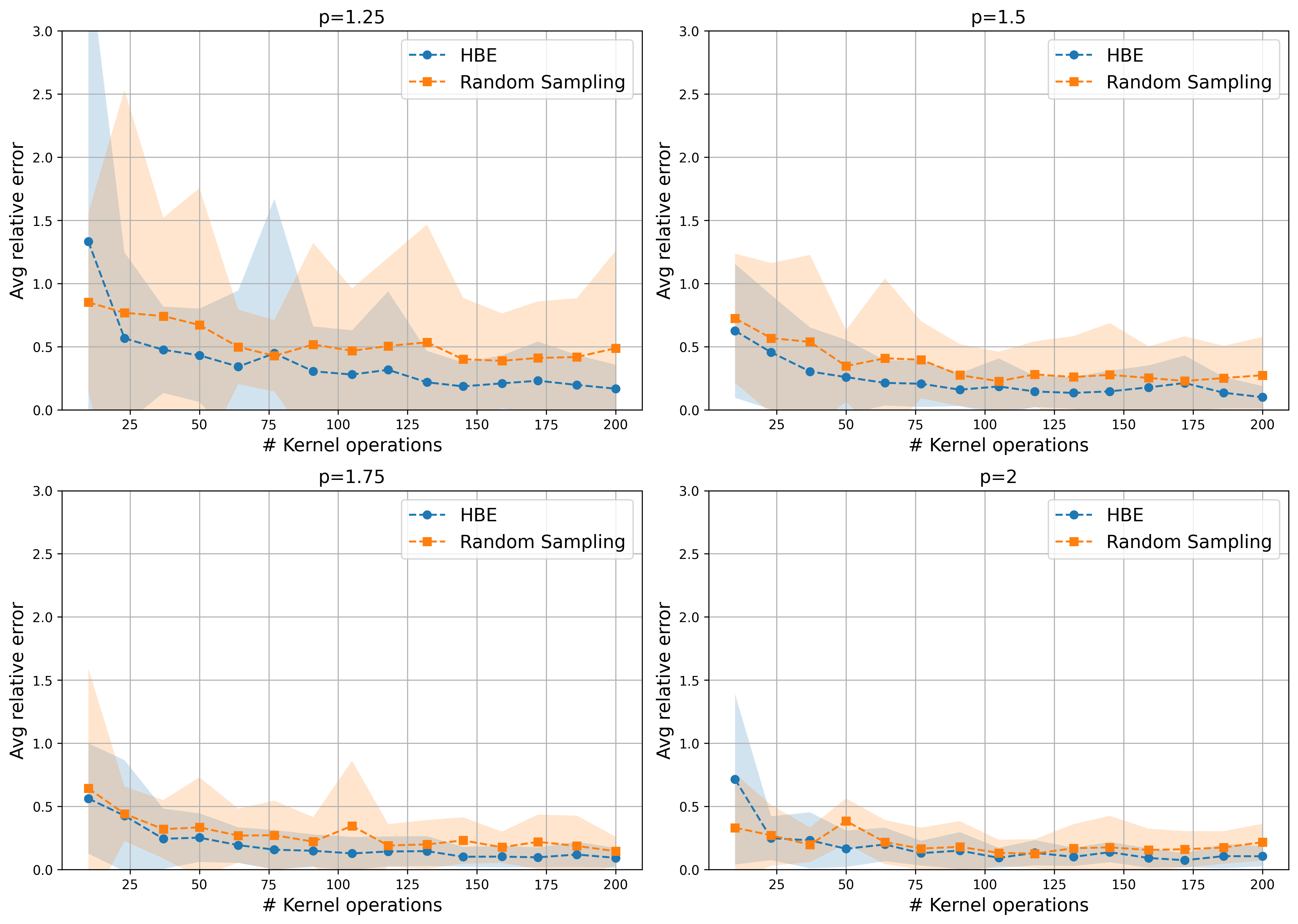}
    \caption{KDE via HBE results for Census}
    \label{fig:hbe_est_census}
\end{figure}

\end{document}